\documentclass[twocolumn]{autart}    

\usepackage{graphicx}          
\usepackage{amsmath}
\usepackage{amssymb}
\usepackage{natbib}
\usepackage{comment}
\usepackage{xurl}
\allowdisplaybreaks

\begin{document}

\begin{frontmatter}

\title{Scalable Gaussian Process Regression \\via Deterministic Trigonometric Features: \\Uniform Bounds for Safe Model Predictive Control}%



\thanks[footnoteinfo]{This paper was not presented at any IFAC
meeting. Corresponding author: Julius Jagdt.
Code is available at \\
\url{https://github.com/jagdt/safe-exploration-dtf-GP}.
}

\author[Aachen]{Julius Jagdt}\ead{julius.jagdt@rwth-aachen.de},
\author[Aachen]{Johanna Menn}\ead{johanna.menn@dsme.rwth-aachen.de},
\author[Aachen]{Sebastian Trimpe}\ead{trimpe@dsme.rwth-aachen.de},
\author[ETH]{Melanie N.~Zeilinger}\ead{mzeilinger@ethz.ch},               
\author[Chalmers]{Anna Scampicchio}\ead{anna.scampicchio@chalmers.se}

\address[Aachen]{Institute for Data Science in Mechanical Engineering, RWTH Aachen University, Germany}
\address[ETH]{Institute for Dynamic Systems and Control, ETH Z\"urich, Z\"urich, Switzerland}             
\address[Chalmers]{Department of Electrical Engineering, Chalmers University of Technology, G\"oteborg, Sweden}

\begin{keyword}                           
Model predictive control; Gaussian processes; Safe learning-based control; Scalable Gaussian process regression; Uniform uncertainty bounds.               
\end{keyword}                             

\begin{abstract}                          
Learning-based Model Predictive Control (MPC) using Gaussian processes (GPs) is an effective approach for safe control in the presence of model mismatch. High-probability safety guarantees typically require uncertainty bounds that hold uniformly over the entire state--input domain, but existing bounds are available only for full GP regression. Since exact GP inference scales poorly with the number of data points, its deployment is impractical in large-data regimes. We close this gap by developing a scalable GP framework that admits the derivation of uniform uncertainty bounds. We formalize a deterministic trigonometric feature Gaussian process (DTF-GP), a finite-dimensional kernel approximation based on discretized trigonometric features that reduces GP regression to Bayesian linear regression in feature space. We derive a high-probability uniform uncertainty bound for the proposed DTF-GP and provide its closed-form solution for the squared-exponential kernel case. Finally, we integrate the DTF-GP into a learning-based MPC scheme and demonstrate that it provides high-probability safety guarantees and  exploration performance comparable to a full GP while improving computational efficiency in large-data regimes.
\end{abstract}

\end{frontmatter}

\section{Introduction}


Ensuring safe operation of dynamical systems 
is a central requirement in many control applications. Model Predictive Control (MPC) has become a standard approach for such problems due to its ability to explicitly handle multivariable systems and constraints~\cite{rawlingsModelPredictiveControl2017,borrelliPredictiveControl2017,schwenzerReviewModelPredictive2021}.
Central to the performance and safety of MPC is the accuracy of the predictive model, as model errors can lead to constraint violations and degraded performance. However, in real-world scenarios, first-principles models often suffer from inaccuracies due to unmodeled dynamics and model mismatch.
Learning-based MPC~\cite{hewingLearningbasedModelPredictive2020} addresses this limitation by augmenting nominal models with data-driven ones.


In this context, Gaussian processes (GPs)~\cite{rasmussenGaussianProcessesMachine2006} have emerged as a powerful tool for modeling unknown dynamics due to their nonparametric nature and their inherent uncertainty quantification~\cite{scampicchioGaussianProcessesDynamics2025}.
A key requirement for safety guarantees in learning-based stochastic MPC is the availability of uncertainty bounds that hold uniformly over the entire state–input domain. Uniform bounds for full GP regression have been derived in the literature (see, e.g.~\cite{srinivasGaussianProcessOptimization2010, abbasiyadkoriOnlineLearningLinearly2012,chowdhuryKernelizedMultiarmedBandits2017, fiedlerPracticalRigorousUncertainty2021,whitehouseSublinearRegretGPUCB2023}). These results have enabled GP-based MPC schemes that provide high-probability safety guarantees~\cite{kollerLearningbasedModelPredictive2018,prajapatSafeTractableGaussian2024,prajapatFiniteSampleBasedReachabilitySafe2025}.

However, a fundamental limitation of GP regression remains its poor scalability, as exact inference scales cubically with the number of data points. To alleviate this, a large body of work has proposed scalable GP approximations~\cite{liuWhenGaussianProcess2020}. Existing scalable GP approaches can be categorized as follows:
(i) inducing-variable methods,
(ii) expert-based methods,
(iii) subset-of-data methods, and
(iv) methods based on finite-dimensional kernel representations.
Inducing-variable methods (i) introduce a set of auxiliary variables to construct low-rank approximations of the kernel matrix~\cite{titsiasVariationalLearningInducing2009,hensmanGaussianProcessesBig2013,lazaro-gredillaInterdomainGaussianProcesses2009}. For this class, convergence guarantees of the approximate posterior and pointwise error bounds have been established~\cite{burtConvergenceSparseVariational2020}, yet no uniform uncertainty bounds are available.
Expert-based methods (ii)~\cite{trespMixturesGaussianProcesses2000,yukselTwentyYearsMixture2012,trappDeepStructuredMixtures2020} and subset-of-data methods (iii)~\cite{hayashiRandomSubsamplingGaussian2020} partition either the input space or the data set. The resulting local models make it difficult to derive uncertainty bounds that hold uniformly over the entire input domain.

Methods based on finite-dimensional kernel representations (iv) approximate the kernel using a finite set of features, thereby reducing GP regression to Bayesian linear regression in a finite-dimensional feature space. 
For stationary kernels, trigonometric features arise naturally from the spectral representation given by Bochner's theorem~\cite{steinInterpolationSpatialData1999}. A key design aspect in these approaches is the selection of the feature frequencies and their associated coefficients.
A prominent example is the Random Fourier Features method~\cite{rahimiRandomFeaturesLargescale2007}, which samples the frequencies from the kernel’s spectral density. For Random Fourier Features, probabilistic guarantees have been established for the expected risk~\cite{rudiGeneralizationPropertiesLearning2017} and for the approximation quality of the kernel~\cite{sriperumbudurOptimalRatesRandom2015}. Quadrature-based approaches~\cite{daoGaussianQuadratureKernel2017,mutnyEfficientHighDimensional2018,liTrigonometricQuadratureFourier2024} use numerical integration rules to determine the frequencies and coefficients, yielding deterministic error bounds on the kernel approximation. However, as with Random Fourier Features, no guarantees are available for the posterior function approximation itself. Sparse Spectrum GP regression~\cite{lazaro-gredillaSparseSpectrumGaussian2010} treats the frequencies as hyperparameters and optimizes them. Although tuning the frequencies achieves good function approximation in practice, the optimization yields only local optima, which prevents establishing theoretical guarantees on the posterior accuracy and uncertainty quantification.
In summary, to the best of our knowledge, none of the existing scalable GP approaches provide uniform uncertainty bounds for the posterior function estimate. As a consequence, they cannot be integrated into safety-critical MPC frameworks that rely on such bounds.



This paper addresses this gap by developing the deterministic trigonometric feature Gaussian process (DTF-GP), a scalable GP framework that retains the ability to provide high-probability uniform uncertainty bounds. The main contributions of this paper are as follows:
(i) we formalize the DTF-GP, a finite-dimensional kernel approximation using trigonometric features with frequencies selected deterministically from a uniformly spaced frequency grid;
(ii) we derive a high-probability uniform uncertainty bound for the DTF-GP; 
(iii) we demonstrate how the proposed model can be integrated into a stochastic MPC framework to enable safe exploration with reduced computational complexity in large-data regimes.

\section{Problem statement}
\label{sec:problem_statement}

We consider a nonlinear, discrete-time dynamical system described by the nonlinear difference equation
\begin{equation*}
\label{eq:system_dynamics}
    x_{i+1} = f(x_i,u_i) = \underbrace{h(x_i,u_i)}_{\text{nominal model}} + \underbrace{g(x_i, u_i)}_{\text{model error}},
\end{equation*}
where $x_i \in \mathbb{R}^{n_x}$ denotes the system state and \mbox{$u_i \in \mathbb{R}^{n_u}$} the control input at time $i$. 
The function $h$ represents a known nominal model (e.g., a first-principles physical model) assumed to be twice continuously differentiable. 
The model error $g$ captures the a priori unknown model mismatch and is to be learned from data.
The system is subject to state and input constraints $x_i \in \mathcal X \subset \mathbb{R}^{n_x}$ and $u_i \in \mathcal{U} \subset \mathbb{R}^{n_u}$, which encode safety or operational requirements 
(e.g., state constraints could represent admissible regions, while input constraints could capture actuator limits). 
We define the joint state--input domain as $\mathcal{Z} \doteq \mathcal{X} \times \mathcal{U} \subset \mathbb{R}^d$, where $d = n_x + n_u$, and assume that $\mathcal{Z}$ is compact.
For $z_i=(x_i^\top,u_i^\top)^\top \in \mathcal Z$ and given noisy observations of state transitions $\hat{x}_{i+1}=f(z_i)+\varepsilon_i$ we learn the model error $g$ from residual observations
\begin{equation}
\label{eq:model_error_observations}
y_i 
= 
\hat{x}_{i+1} - h(z_i)
=
g(z_i) + \varepsilon_i, \qquad i=1,\dots,N,
\end{equation}
where the noise sequence $(\varepsilon_i)_{i\ge1}$ is assumed to be conditionally $R$-sub-Gaussian.

We model $g$ by a GP prior with kernel $k$.
A common choice is the squared-exponential kernel
\begin{equation}
\label{eq:rbf_kernel}
    k_{\mathrm{RBF}}(z,z')
    =
    \sigma_f^2
    \exp\!\left(
    -\frac{1}{2}(z-z')^\top \Lambda^{-1}(z-z')
    \right),
\end{equation}
where $\sigma_f^2>0$ denotes the prior variance and
$\Lambda=\operatorname{diag}(\ell_1^2,\dots,\ell_d^2)$ contains the squared lengthscales. 
This kernel serves as the primary example throughout the paper.

The proposed scalable GP model is based on a finite-dimensional feature representation
\[
    k(z,z') \approx \tilde k(z,z') = \phi(z)^\top\phi(z'),
    \qquad \phi(z)\in\mathbb R^M.
\]
Such an approximation allows us to reduce GP regression into Bayesian linear regression in feature space. Introducing the feature matrix and the vector of output measurements
\[
    \Phi_N \doteq
    \begin{bmatrix}
    \phi(z_1)^\top \\
    \vdots \\
    \phi(z_N)^\top
    \end{bmatrix},
    \qquad
    y_{1:N}\doteq(y_1,\dots,y_N)^\top,
\]
and defining, for observation noise variance $\sigma_n^2$, 
\begin{equation}
\label{eq:regularized_feature_matrix}
    V_N \doteq \Phi_N^\top\Phi_N+\sigma_n^2 I,
\end{equation}
the scalable GP model features a Gaussian posterior distribution with mean and variance given by 
\begin{align}
\label{eq:feature_posterior_mean}
    \mu_N(z)
    &=
    \phi(z)^\top V_N^{-1}\Phi_N^\top y_{1:N},\\
\label{eq:feature_posterior_variance}
    \sigma_N^2(z)
    &=
    \sigma_n^2\,\phi(z)^\top V_N^{-1}\phi(z).
\end{align}
In order to provide safety guarantees in learning-based MPC, a high-probability uniform uncertainty bound of the form
\begin{equation*}
\left|g(z) - \mu_N(z) \right| \le \nu_N(z), \quad \forall z \in \mathcal{Z},
\end{equation*} 
is required, where $\nu_N(\cdot)$ is a computable confidence bound.

\section{Scalable Gaussian process regression via deterministic trigonometric features}
\label{sec:scalable_gp}

We now introduce the proposed scalable GP model, referred to as a deterministic trigonometric feature Gaussian process (DTF-GP).
We begin by reviewing the spectral representation of stationary kernels in Section~\ref{sec:spectral_representation}, which motivates the proposed approach.
Based on this representation, we present in Section~\ref{sec:scalable_kernel} a finite-dimensional trigonometric kernel approximation using a fixed deterministic frequency grid.
In contrast to randomized or optimized frequency selection approaches such as Random Fourier Features~\cite{rahimiRandomFeaturesLargescale2007} and Sparse Spectrum GP regression~\cite{lazaro-gredillaSparseSpectrumGaussian2010}, this deterministic construction enables the derivation of the uniform uncertainty bound (presented in Section~\ref{sec:main_bound}) required for safety guarantees in learning-based MPC schemes.
We then introduce a structured spectral parametrization and a corresponding frequency selection strategy for approximating the squared-exponential (RBF) kernel in Sections~\ref{sec:spectral_parametrization} and~\ref{sec:frequency_selection}. While these choices are tailored to the RBF kernel, the general ideas of spectral parametrization and frequency truncation can in principle be extended to other stationary kernels. Finally, in Section~\ref{sec:hypothesis_space}, we characterize the associated hypothesis space through an RKHS interpretation of the constructed kernel, which forms the basis for the uncertainty analysis in the subsequent section.

\subsection{Spectral representation of stationary kernels}
\label{sec:spectral_representation}

By Bochner’s theorem~\cite{steinInterpolationSpatialData1999}, any continuous stationary kernel admits the spectral representation
\begin{equation*}
\label{eq:bochner_density}
k(\tau) \;=\; \int_{\mathbb{R}^d} e^{i2\pi \omega^\top \tau}\, S(\omega)\,\mathrm{d}\omega,
\end{equation*}
where $\tau=z-z'$ and $S(\omega)$ denotes the spectral density of the kernel.
Since $k$ is real-valued, the kernel can be expressed in terms of trigonometric functions as
\begin{equation}
\label{eq:trig_motivation}
\begin{aligned}
k(z-z')
&= \int_{\mathbb{R}^d} \Bigl( \cos(2\pi \omega^\top z)\cos(2\pi \omega^\top z') \\
&\quad + \sin(2\pi \omega^\top z)\sin(2\pi \omega^\top z')\Bigr)\, S(\omega)\,\mathrm{d}\omega .
\end{aligned}
\end{equation}
This representation shows that stationary kernels can be viewed as continuous superpositions of trigonometric basis functions weighted by the spectral density, and naturally motivates using weighted trigonometric features for the finite-dimensional approximation of the kernel.

\subsection{Finite-dimensional kernel approximation}
\label{sec:scalable_kernel}

We approximate the spectral integral~\eqref{eq:trig_motivation} by a finite weighted sum of trigonometric basis functions at selected frequencies.
Due to symmetry of the trigonometric features, one representative from each symmetric frequency pair is sufficient.
We therefore define the half lattice
\begin{equation}
\label{eq:half_lattice}
\mathbb Z^d_+
\doteq
\left\{
q\in\mathbb Z^d\setminus\{0\}
:\,
q_j>0
\text{ for } j=\min\{i:q_i\neq0\}
\right\}.
\end{equation}
Thus, for every nonzero pair $\{q,-q\}$, exactly one representative is contained in $\mathbb Z^d_+$.
For a period vector $T=(T_1,\dots,T_d)\in\mathbb R^d_{>0}$, we define a frequency grid as
\begin{equation}
\label{eq:omega_q}
\omega_q = \left(\frac{q_1}{T_1},\dots,\frac{q_d}{T_d}\right), 
\qquad q\in\mathbb Z^d_+ .
\end{equation}
Selecting a finite subset of frequencies $\{\omega^{(1)},\dots,\omega^{(Q)}\}$ from the grid, we define the feature map
\begin{equation}
\label{eq:phi_trig_features}
\phi(z) \doteq
\begin{bmatrix}
\sqrt{\lambda^{(0)}} \\
\sqrt{\lambda^{(1)}}\sqrt{2}\cos(2\pi (\omega^{(1)})^\top z) \\
\sqrt{\lambda^{(1)}}\sqrt{2}\sin(2\pi (\omega^{(1)})^\top z) \\
\vdots \\
\sqrt{\lambda^{(Q)}}\sqrt{2}\cos(2\pi (\omega^{(Q)})^\top z) \\
\sqrt{\lambda^{(Q)}}\sqrt{2}\sin(2\pi (\omega^{(Q)})^\top z)
\end{bmatrix}
\in \mathbb R^{M},
\end{equation}
where \(\lambda^{(s)} > 0\) are spectral weights associated with the selected frequencies $\omega^{(s)}$ and $M=2Q+1$ denotes the number of features. This induces the finite-dimensional kernel
\begin{equation*}
\tilde{k}_\mathrm{DTF}(z,z') = \phi(z)^\top \phi(z').
\end{equation*}
The resulting kernel can be interpreted as a discretization of the spectral integral~\eqref{eq:trig_motivation}, where the continuous spectral density is approximated by weighted point masses at the selected frequencies.
This interpretation becomes transparent when the kernel is written component-wise:
\begin{equation}
\label{eq:finite_kernel_sum}
\tilde{k}_\mathrm{DTF}(z,z') 
= \sum_{m=0}^{M-1} \lambda^{(\lceil m/2\rceil)}\,
\varphi_m(z)\,\varphi_m(z'),
\end{equation}
with
\begin{equation}
\label{eq:trig_basis_functions}
\varphi_m(z) =
\begin{cases}
\qquad\qquad 1,
& \text{if } m=0, \\[6pt]
\sqrt{2}\,\cos\!\bigl( 2\pi\,(\omega^{(\lceil m/2\rceil)})^\top z \bigr),
& \text{if } m \text{ is odd}, \\[6pt]
\sqrt{2}\,\sin\!\bigl( 2\pi\,(\omega^{(\lceil m/2\rceil)})^\top z \bigr),
& \text{if } m \text{ is even}.
\end{cases}
\end{equation}
In particular, the weights \(\lambda^{(s)}\) correspond to a discretized spectral density and determine the smoothness properties of the resulting function class.

\subsection{Spectral parameterization}
\label{sec:spectral_parametrization}

In principle, one could treat each weight $\lambda^{(s)}$ as an independent hyperparameter. However, this would lead to a large number of hyperparameters.
To alleviate this, we impose the decay structure of the spectral density $S(\omega)$ of the target kernel on the weights. Specifically, we define weights $\lambda_q$ on the full frequency grid, where $\lambda_q$ denotes the weight associated with the frequency $\omega_q$. For the squared-exponential (RBF) kernel, the spectral density is Gaussian and given by
\[
S_{\mathrm{RBF}}(\omega)
=
\sigma_f^2 (2\pi)^{d/2}\det(\Lambda)^{1/2}
\exp\!\left(
-2\pi^2 \omega^\top \Lambda \omega
\right).
\]
Motivated by this exponential spectral decay, we choose
\begin{equation}
\label{eq:lambda_decay}
\lambda_q = C \exp(-\omega_q^\top D \omega_q),
\end{equation}
with hyperparameters $C > 0$ and diagonal matrix $D=\operatorname{diag}(a_1,\dots,a_d) \succ 0$. This parametrization mirrors the role of the prior variance and lengthscales in the original RBF kernel, preserving both the number and interpretation of hyperparameters. In particular, the scaling factor $C$ controls the overall spectral energy and corresponds to the prior variance, while the matrix $D$ determines the rate of spectral decay and thus governs the smoothness of functions in the induced RKHS, analogous to the role of lengthscales in the RBF kernel.
Moreover, the structured decay enables the derivation of an explicit error bound (cf.~Section~\ref{sec:projection_error_rbf}).

\subsection{Frequency selection}
\label{sec:frequency_selection}

The finite representation is obtained by truncating the frequency grid to the most relevant components. Since the weights defined in~\eqref{eq:lambda_decay} decay with $\omega^\top D \omega$, where the frequencies are given by~\eqref{eq:omega_q}, their level sets are ellipsoids in frequency space. We therefore retain frequencies within an ellipsoidal region
\begin{equation}
\label{eq:frequency_selection}
q^\top \widetilde D q \le r^2,
\qquad
\widetilde D \doteq \mathrm{diag}\!\left(\frac{a_1}{T_1^2},\dots,\frac{a_d}{T_d^2}\right),
\end{equation}
for a suitable truncation radius $r$. For a given number of retained grid frequencies, this selection keeps the frequencies with the largest spectral weights and hence minimizes the discarded spectral mass.


\subsection{Hypothesis space}
\label{sec:hypothesis_space}

The DTF-GP corresponds to truncating an infinite-dimensional reproducing kernel Hilbert space (RKHS) $\mathcal H_{k_\mathrm{DTF}}$ induced by the DTF kernel $k_\mathrm{DTF}$, which admits the representation
\begin{equation}
\label{eq:infinite_kernel}
k_\mathrm{DTF}(z,z') = \sum_{m=0}^\infty \lambda^{(\lceil m/2\rceil)}\,
\varphi_m(z)\varphi_m(z').
\end{equation}
Any function $g \in \mathcal H_{k_\mathrm{DTF}}$ admits an expansion with respect to the orthonormal system $\{\varphi_m\}_{m\ge1}$
\begin{equation}
\label{eq:function_expansion}
g(z) = \sum_{m=0}^\infty \alpha_m \varphi_m(z),
\end{equation}
with coefficients $(\alpha_m)_{m\ge1}$ satisfying
\begin{equation}
\label{eq:rkhs_norm}
\|g\|_{\mathcal H_{k_\mathrm{DTF}}}^2
= \sum_{m=0}^\infty \frac{\alpha_m^2}{\lambda^{(\lceil m/2\rceil)}} < \infty.
\end{equation}
Note that, on the compact domain $\mathcal{Z}=\prod_{j=1}^d [0,T_j]$ equipped with the normalized Lebesgue measure $d\mu(z)=\frac{1}{|\mathcal{Z}|}dz$, the basis functions $\{\varphi_m\}_{m\ge0}$ form an orthonormal system and the weights $\lambda^{(\lceil m/2\rceil)}$ are the Mercer eigenvalues of the DTF kernel. Further details are provided in Appendix~\ref{app:mercer}.

The finite kernel~\eqref{eq:finite_kernel_sum} induces a finite-dimensional RKHS $\mathcal{H}_{\tilde{k}_\mathrm{DTF}} \subset \mathcal H_{k_\mathrm{DTF}}$, spanned by the truncated $M$ basis functions.
To relate these spaces, we introduce the orthogonal projection
\begin{equation}
\label{eq:projection_operator}
    P : \mathcal H_{k_\mathrm{DTF}} \to \mathcal H_{\tilde{k}_\mathrm{DTF}}.
\end{equation}
In terms of the expansion~\eqref{eq:function_expansion}, the projection is given by
\begin{equation}
\label{eq:projection}
(Pg)(z) = \sum_{m=0}^{M-1} \alpha_m \varphi_m(z).
\end{equation}

In the following, we will assume that the unknown function $g$ belongs to the RKHS $\mathcal H_{k_\mathrm{DTF}}$ induced by the infinite-dimensional kernel. This infinite-dimensional RKHS therefore serves as the hypothesis space for the derivation of our uniform uncertainty bound.

\begin{rem}[Periodicity]
\label{rem:periodicity}
Due to the use of a discrete frequency grid, the resulting kernel $k_\mathrm{DTF}$ is $T=(T_1,\dots,T_d)$-periodic in each coordinate. For a suitable choice of the weights, $k_\mathrm{DTF}$ coincides with the Fourier series representation of a periodized stationary kernel. A derivation of this equivalence is provided in Appendix~\ref{app:periodization}.
While this introduces an implicit periodicity assumption, its effect on non-periodic functions is negligible in practice: as analyzed in Appendix~\ref{app:periodic_extension}, if the period $T$ is chosen sufficiently larger than the domain of interest, the induced approximation error decays rapidly (exponentially for RBF kernels) with the margin between the domain and the period.
\end{rem}

\section{Uniform uncertainty bound}\label{sec:main_bound}

In this section, we derive a high-probability uniform uncertainty bound for the DTF-GP. We do so by making use of the projection operator introduced in~\eqref{eq:projection_operator}, which relates the infinite-dimensional hypothesis space to its finite-dimensional approximation. 
We first present the main uniform uncertainty bound in Section~\ref{sec:uniform_uncertainty_bound}, which applies to general finite-dimensional kernel approximations. We then derive an explicit projection error bound for the proposed approximation of the RBF kernel in Section~\ref{sec:projection_error_rbf}. Finally, Section~\ref{sec:numerical_illustration} illustrates the behavior of the resulting confidence bound, in particular the effect of the number of retained frequencies on its conservatism.

The uncertainty bound derived in this section is of \emph{a posteriori} type, i.e., it depends on quantities that are computed from the observed data after posterior inference. In contrast to GP confidence bounds based on information gain (e.g.,~\cite{srinivasGaussianProcessOptimization2010,chowdhuryKernelizedMultiarmedBandits2017}), which are \emph{a priori} and independent of the realized data, a posteriori bounds exploit the structure of the observed data and are typically less conservative. This makes them well suited for control applications, where posterior quantities are naturally available after inference and the confidence bounds can be refined online as new data are collected~\cite{fiedlerPracticalRigorousUncertainty2021}.

\subsection{Uniform uncertainty bound}
\label{sec:uniform_uncertainty_bound}

In the following, we present our main result.

\begin{thm}
\label{thm:uniform_bound}

Let $\mathcal Z \subset \mathbb R^d$ be compact and $(\mathcal F_i)_{i\ge 0}$ be a filtration. 
Let $(z_i)_{i\ge 1}$ be a $\mathcal Z$-valued stochastic process that is predictable with respect to $(\mathcal F_i)_{i\ge 0}$. 
Consider the measurements model $y_i = g(z_i) + \varepsilon_i, \,\forall\, i\ge 1$, where $(\varepsilon_i)_{i\ge 1}$ is adapted to $(\mathcal F_i)_{i\ge 0}$ and conditionally $R$-sub-Gaussian given $\mathcal F_{i-1}$.
Let $k_\mathrm{DTF}$ be a positive definite kernel with corresponding RKHS $\mathcal H_{k_\mathrm{DTF}}$ as introduced in Section~\ref{sec:scalable_gp}.
Let $Pg$ denote the projection defined in~\eqref{eq:projection}, and let $\tilde{k}_\mathrm{DTF}$ be the associated finite-dimensional kernel. 
Assume that $g \in \mathcal H_{k_\mathrm{DTF}}$ with $\|g\|_{\mathcal H_{k_\mathrm{DTF}}} \le B$, and let $\mu_N(\cdot)$ and $\sigma_N(\cdot)$ denote the posterior mean and standard deviation of the GP with kernel $\tilde{k}_\mathrm{DTF}$ as defined in~\eqref{eq:feature_posterior_mean} and~\eqref{eq:feature_posterior_variance}, respectively, with $V_N$ as in~\eqref{eq:regularized_feature_matrix}.
Then, for any $\delta \in (0,1]$, with probability at least $1-\delta$,
uniformly for all $N \ge 1$ and all $z \in \mathcal Z$,
\begin{equation}
\label{eq:final_uniform_bound}
\bigl|g(z)-\mu_N(z)\bigr|
\;\le\;
\bigl| g(z) - Pg(z) \bigr|
+
\sigma_N(z)\,\beta_N,
\end{equation}
where
\begin{equation*}
\label{eq:beta_N}
\begin{aligned}
\beta_N =
&\;B
\;+\;\frac{R}{\sigma_n}
\sqrt{2\ln\!\left(
\frac{1}{\delta}
\sqrt{\det\!\left(I+\frac{1}{\sigma_n^2}\Phi_N^\top\Phi_N\right)}
\right)} \\
&+\; \frac{1}{\sigma_n}
\bigl\|\Phi_N^\top\!\bigl(Pg_{1:N}-g_{1:N}\bigr)\bigr\|_{V_N^{-1}}.
\end{aligned}
\end{equation*}

\end{thm}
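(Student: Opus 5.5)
We split the error with the triangle inequality,
\[
|g(z)-\mu_N(z)| \le |g(z)-Pg(z)| + |Pg(z)-\mu_N(z)|,
\]
so only the second term needs work. Since $Pg\in\mathcal H_{\tilde k_\mathrm{DTF}}$, it has a finite-dimensional parametrization. Writing $Pg(z)=\sum_{m=0}^{M-1}\alpha_m\varphi_m(z)$ and setting $\theta_m=\alpha_m/\sqrt{\lambda^{(\lceil m/2\rceil)}}$, we get $Pg(z)=\phi(z)^\top\theta$. Moreover, by~\eqref{eq:rkhs_norm},
\[
\|\theta\|_2^2=\|Pg\|^2_{\mathcal H_{k_\mathrm{DTF}}}\le\|g\|^2_{\mathcal H_{k_\mathrm{DTF}}}\le B^2,
\]
because $P$ is an orthogonal projection. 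The observations become a linear model with misspecification:
\[
y_{1:N}=\Phi_N\theta+\varepsilon_{1:N}+\bigl(g_{1:N}-Pg_{1:N}\bigr).
\]

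Substituting into~\eqref{eq:feature_posterior_mean} gives
\[
\mu_N(z)=\phi(z)^\top V_N^{-1}\Phi_N^\top y_{1:N}.
\]
Using $V_N^{-1}\Phi_N^\top\Phi_N=I-\sigma_n^2V_N^{-1}$, we obtain
\[
Pg(z)-\mu_N(z)=\sigma_n^2\phi(z)^\top V_N^{-1}\theta-\phi(z)^\top V_N^{-1}\Phi_N^\top\varepsilon_{1:N}-\phi(z)^\top V_N^{-1}\Phi_N^\top(g_{1:N}-Pg_{1:N}).
\]
Each term is bounded by Cauchy--Schwarz in the $V_N^{-1}$ inner product, $|\phi^\top V_N^{-1}w|\le\|\phi\|_{V_N^{-1}}\|w\|_{V_N^{-1}}$, together with $\|\phi(z)\|_{V_N^{-1}}=\sigma_N(z)/\sigma_n$ from~\eqref{eq:feature_posterior_variance}.

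\textbf{Term 1.} We have $\sigma_n^2\|\theta\|_{V_N^{-1}}\le\sigma_n\|\theta\|_2\le\sigma_n B$, since $V_N\succeq\sigma_n^2I$. This yields the contribution $\sigma_N(z)B$.

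\textbf{Term 3.} This gives exactly $\frac{\sigma_N(z)}{\sigma_n}\|\Phi_N^\top(Pg_{1:N}-g_{1:N})\|_{V_N^{-1}}$ with no further work.

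\textbf{Term 2.} This requires the self-normalized martingale bound of Abbasi-Yadkori et al. for vector-valued processes. The features $\phi(z_i)$ are $\mathcal F_{i-1}$-measurable because $z_i$ is predictable, and $\varepsilon_i$ is conditionally $R$-sub-Gaussian. The theorem applied with regularizer $\sigma_n^2 I$ then gives, with probability at least $1-\delta$ simultaneously for all $N\ge1$,
\[
\|\Phi_N^\top\varepsilon_{1:N}\|_{V_N^{-1}}^2\le 2R^2\ln\!\Bigl(\tfrac1\delta\det(V_N)^{1/2}\det(\sigma_n^2I)^{-1/2}\Bigr),
\]
and $\det(V_N)/\det(\sigma_n^2 I)=\det(I+\sigma_n^{-2}\Phi_N^\top\Phi_N)$. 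Multiplying by $\sigma_N(z)/\sigma_n$ gives the middle term of $\beta_N$.

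Summing the three contributions gives~\eqref{eq:final_uniform_bound}. Uniformity in $z$ is automatic, because the single probabilistic event concerns only $\Phi_N^\top\varepsilon_{1:N}$ and does not depend on $z$. The main obstacle is Term 2: checking the measurability and filtration conditions of the self-normalized bound, in particular that the misspecification term is deterministic given $z_{1:N}$ and therefore does not interfere with the martingale. This holds because it is handled separately and deterministically via Cauchy--Schwarz. I would also verify that the regularization constant in Abbasi-Yadkori's determinant ratio matches $\sigma_n^2$ and that no rescaling of $R$ by $\sigma_n$ is lost.
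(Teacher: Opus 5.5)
Your proof is correct and follows essentially the same route as the paper. You obtain the same four terms: the projection error, the regularization bias $\sigma_n^2\phi(z)^\top V_N^{-1}\theta$ bounded by $\sigma_N(z)B$ via $V_N\succeq\sigma_n^2 I$, the self-normalized noise term, and the Cauchy--Schwarz bound on the projection residual. The paper only orders the decomposition differently, first separating noise from the noise-free error and then adding and subtracting $Pg$ and $\phi(z)^\top V_N^{-1}\Phi_N^\top Pg_{1:N}$.
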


\begin{pf}
The proof follows the approach of
~\cite{chowdhuryKernelizedMultiarmedBandits2017} by decomposing the error into a noise term and an approximation term.
The noise term is bounded using a self-normalized concentration inequality from
~\cite{abbasiyadkoriOnlineLearningLinearly2012}. 
The approximation term is handled by introducing the projection $P$ onto the finite-dimensional RKHS. 
The detailed proof is given in Appendix~\ref{app:proof_uniform_bound}. \hfill$\blacksquare$
\end{pf}

For comparison, the corresponding high-probability confidence bound for a full GP takes the form
\begin{equation}
\label{eq:full_gp_bound}
\bigl|g(z)-\mu_N^\mathrm{full}(z)\bigr| 
\le \beta_N^\mathrm{full} \, \sigma_N^\mathrm{full}(z),
\end{equation}
with
\begin{equation*}
\label{eq:full_gp_beta}
\beta_N^\mathrm{full} = \;B
\;+\; \frac{R}{\sigma_n}\,\sqrt{ 2 \ln \!\left(
    \frac{1} { \delta }  \sqrt{\det\!\left(I + \frac{1}{\sigma_n^2} K_N\right)}
    \right) },
\end{equation*}
where $K_N$ denotes the Gram matrix.
This bound follows from combining a self-normalized concentration inequality 
(see~\cite[Corollary 3.5]{abbasiyadkoriOnlineLearningLinearly2012}) and~\cite[Corollary 1]{whitehouseSublinearRegretGPUCB2023}) with an RKHS norm bound on the approximation term~\cite{chowdhuryKernelizedMultiarmedBandits2017}. 

The bound in Theorem~\ref{thm:uniform_bound} has the same overall structure as the full GP confidence bound~\eqref{eq:full_gp_bound}. In particular, the noise term in Theorem~\ref{thm:uniform_bound} is the feature-space analogue of the noise term in the full GP bound. However, the proposed bound contains two additional terms that account for the finite-dimensional kernel approximation: the projection error $\bigl|g(z)-Pg(z)\bigr|$ and an additional term in $\beta_N$ that depends on the projection residual $Pg_{1:N}-g_{1:N}$. The latter can be upper bounded in terms of the maximum projection error $\sup_{z\in\mathcal Z}\bigl|g(z)-Pg(z)\bigr|$. Consequently, the proposed DTF-GP confidence bound depends on the same quantities as the full GP confidence bound~\eqref{eq:full_gp_bound}, namely the RKHS norm, the sub-Gaussian parameter $R$, and posterior quantities, together with an upper bound on the projection error.

\begin{rem}[RKHS norm bound]
As in full GP confidence bounds, the proposed bound requires an upper bound $B$ on the RKHS norm of the unknown model error. The problem of obtaining such a bound is well known in the literature; see, e.g.,~\cite{fiedlerSafetySafeBayesian2024} for a discussion. Since this issue is independent of the DTF-GP approximation, it is not addressed in this paper. We refer to, e.g.,~\cite{tokmakAutomaticNonlinearMPC2025,tokmakPACSBOProbablyApproximately2024,tokmakSafeExplorationReproducing2025,wenzelReliableSamplingbasedRKHS2026} for recent work on estimating an upper bound on the RKHS norm.
\end{rem}

\subsection{Projection error bound for RBF kernel}
\label{sec:projection_error_rbf}

To evaluate the uniform uncertainty bound in Theorem~\ref{thm:uniform_bound}, an upper bound on the maximum projection error $\sup_{z\in\mathcal Z}\bigl|g(z)-Pg(z)\bigr|$ is required. We present an explicit upper bound for this term for the RBF spectral parameterization~\eqref{eq:lambda_decay} and the ellipsoidal frequency selection~\eqref{eq:frequency_selection}, which were introduced to approximate the RBF kernel.

\begin{prop}
\label{prop:projection_error_rbf}
Assume that $g \in \mathcal H_{k_\mathrm{DTF}}$ with $\|g\|_{\mathcal H_{k_\mathrm{DTF}}} \le B$. Let the spectral weights be
given by
\[
\lambda_q = C \exp(-q^\top \widetilde D q),
\]
with $C>0$ and $\widetilde D \succ 0$, and let the retained frequency set be
\[
\Omega_r = \{\omega_q \mid q^\top \widetilde D q \le r^2\}.
\]
Define
\[
\rho \doteq \frac{1}{2}\sqrt{\operatorname{tr}(\widetilde D)}
\]
and let
\[
S_{d-1} \doteq \frac{2\pi^{d/2}}{\Gamma(d/2)}
\]
denote the surface area of the unit sphere in $\mathbb R^d$. Then
\begin{align*}
&\sup_{z\in\mathcal Z}|g(z)-Pg(z)| \\
& \qquad \quad \le
2B
\Bigg(
\frac{C}{\sqrt{\det(\widetilde D)}}\,
S_{d-1}
\int_{r-\rho}^{\infty}
e^{-(t-\rho)^2}t^{d-1}\,\mathrm dt
\Bigg)^{1/2}.
\end{align*}
\end{prop}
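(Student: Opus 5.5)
We bound the tail of the expansion pointwise with Cauchy--Schwarz and then control the discarded spectral mass by comparing a lattice sum with an integral. By \eqref{eq:function_expansion} and \eqref{eq:projection},
\[
g(z)-Pg(z)=\sum_{m\ge M}\alpha_m\varphi_m(z)
=\sum_{m\ge M}\frac{\alpha_m}{\sqrt{\lambda^{(\lceil m/2\rceil)}}}\,\sqrt{\lambda^{(\lceil m/2\rceil)}}\,\varphi_m(z).
\]
Applying Cauchy--Schwarz and using \eqref{eq:rkhs_norm} gives
\[
|g(z)-Pg(z)|\le \|g\|_{\mathcal H_{k_\mathrm{DTF}}}\Big(\sum_{m\ge M}\lambda^{(\lceil m/2\rceil)}\varphi_m(z)^2\Big)^{1/2}.
\]
For each discarded half-lattice frequency $q$, the cosine/sine pair contributes $\lambda_q\cdot 2(\cos^2+\sin^2)=2\lambda_q$, independently of $z$. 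Hence
\[
\sup_z|g(z)-Pg(z)|\le B\Big(2\sum_{q\in\mathbb Z^d_+,\;q^\top\widetilde Dq>r^2}\lambda_q\Big)^{1/2}.
\]
Extending the sum to all of $\mathbb Z^d$ (the set is symmetric under $q\mapsto -q$) gives $2\sum_{\mathbb Z^d_+}=\sum_{\mathbb Z^d\setminus\{0\}}$. Since the claimed bound carries the factor $2B$, there is slack, and we only need
\[
\sum_{q\in\mathbb Z^d,\;q^\top\widetilde Dq>r^2}\lambda_q \le 4\times(\text{bracket}).
\]
It is simplest to bound the full-lattice sum by the integral expression directly; the constant $4$ then absorbs any bookkeeping loss.

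The main step is a sum-to-integral comparison. Change variables $p=\widetilde D^{1/2}q$. Attach to each lattice point the unit cell $q+[-\tfrac12,\tfrac12]^d$, which in $p$-coordinates has volume $\sqrt{\det\widetilde D}$. For $u$ in the cell, $\|\widetilde D^{1/2}(u-q)\|\le \tfrac12\sqrt{\operatorname{tr}\widetilde D}=\rho$. So every point $p'$ in the image cell satisfies $\|p'\|\ge\|p\|-\rho$ and $\|p'\|\le\|p\|+\rho$. Because $\|p\|>r$, we have $\|p\|\ge\|p'\|-\rho$. Provided $\|p'\|\ge\rho$, this gives
\[
e^{-\|p\|^2}\le e^{-(\|p'\|-\rho)^2}.
\]
Averaging over the cell then yields
\[
\lambda_q\le \frac{C}{\sqrt{\det\widetilde D}}\int_{\text{cell}}e^{-(\|p'\|-\rho)^2}\,dp'.
\]
The cells are disjoint and lie in $\{\|p'\|\ge r-\rho\}$. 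Summing over cells and passing to polar coordinates gives
\[
\frac{C}{\sqrt{\det\widetilde D}}\,S_{d-1}\int_{r-\rho}^\infty e^{-(t-\rho)^2}t^{d-1}\,dt,
\]
which is exactly the bracket. This is in fact stronger than needed, so the factor $2B$ holds comfortably.

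The step I expect to be delicate is the monotonicity argument when $\|p'\|<\rho$, where $t-\rho<0$ and $e^{-(t-\rho)^2}$ is no longer monotone. This case can only occur if $r<2\rho$. I would handle it by replacing $(\|p'\|-\rho)$ with $\max(\|p'\|-\rho,0)$, which still dominates. Alternatively, note that whenever $r-\rho<\rho$ the integral over $[r-\rho,\rho]$ already exceeds what is needed, since the integrand is then at least $e^{-\rho^2}t^{d-1}$; the slack constant absorbs the difference. I would check this edge case carefully but expect it to be routine.
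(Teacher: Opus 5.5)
Your proof is essentially the paper's: Cauchy--Schwarz against the RKHS norm reduces the supremum to the discarded spectral mass, which is then bounded by the same cell-by-cell comparison $e^{-\|p\|^2}\le e^{-(\|p'\|-\rho)^2}$ on the lattice $\widetilde D^{1/2}\mathbb Z^d$ followed by polar coordinates. Your version of the first step is slightly sharper: you keep $\varphi_m(z)^2$ inside Cauchy--Schwarz, so each sine/cosine pair contributes exactly $2\lambda_q$, and you then pass to the full lattice instead of bounding the half-lattice sum by the full-lattice integral; together this gives the bound with $B$ in place of $2B$.

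The edge case you flag only matters when $r<\rho$. For $r\ge\rho$ one has $\|p\|>r\ge\rho\ge\rho-\|p'\|$, so the pointwise inequality holds even if $\|p'\|<\rho$. The paper's proof passes over the case $r<\rho$ in exactly the same way. Neither of your patches closes it as written, though: the $\max(\cdot,0)$ integrand gives an integral whose ratio to the stated one grows without bound in $\rho$, so no fixed constant absorbs the difference.
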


\begin{pf}
In a first step, the projection error is bounded in terms of the RKHS norm and the sum of discarded weights
\[
\sup_{z\in\mathcal Z}|g(z)-Pg(z)|
\le
2B
\left(
\sum_{q^\top \widetilde D q > r^2} \lambda_q
\right)^{1/2}.
\]
Using the parametrization of the weights~\eqref{eq:lambda_decay}, the remaining sum is turned into an integral via a lattice argument. After a change of variables, this yields an integral over $\mathbb R^d$, which is reduced to a one-dimensional integral using polar coordinates. The detailed derivation is given in Appendix~\ref{app:proof_proj_error}. \hfill$\blacksquare$
\end{pf}

The bound in Proposition~\ref{prop:projection_error_rbf} depends only on the RKHS norm, the parameters $C$ and $\widetilde D$, and the radius $r$ used to truncate the frequencies. The remaining integral can be evaluated numerically in practice. Consequently, when combined with Theorem~\ref{thm:uniform_bound}, the resulting DTF-GP uncertainty bound requires only the RKHS norm, the sub-Gaussian parameter $R$, and quantities that are known a posteriori. Hence, for the considered approximation of the RBF kernel, the DTF-GP uniform uncertainty bound relies on the same requirements as full GP confidence bounds.

\subsection{Numerical illustration}
\label{sec:numerical_illustration}

We illustrate the behavior of the proposed uniform uncertainty bound and compare it to the corresponding full GP confidence bound~\eqref{eq:full_gp_bound}. We consider a one-dimensional regression problem on the compact domain $\mathcal Z = [-5,5]$, where $200$ noisy observations are sampled uniformly from $\mathcal Z$ with noise variance $\sigma_n^2=0.04$.
The ground-truth function is generated as a finite RBF-kernel expansion
\[
g(z)
=
\sum_{i=1}^{N_s} w_i k_{\mathrm{RBF}}(s_i,z),
\]
where the support points $s_i$ are sampled uniformly from $\mathcal Z$ and the coefficients $w_i$ are drawn from a standard normal distribution.
The RBF kernel used to generate $g_{\mathrm{true}}$ has prior variance $\sigma_f^2=0.5$ and lengthscale $\ell=0.5$.
The DTF-GP is compared to a full GP using hyperparameters learned by maximizing the marginal likelihood. For the DTF-GP, the finite-dimensional approximation is constructed as introduced in Section~\ref{sec:scalable_gp}, with period $T=15$.
All bounds are evaluated with confidence level $1-\delta=0.95$. 
For the DTF-GP confidence bound, the projection error is computed using Proposition~\ref{prop:projection_error_rbf}. Since the ground-truth function is known in this controlled experiment, the RKHS norm bounds used in the confidence bounds are computed from the true underlying function.

To interpret the contributions to the bound width, we decompose the DTF-GP confidence bound as
\begin{equation}
\label{eq:dtf_bound_decomposition}
\nu_N(z)
=
\mathcal B_{\mathrm{RKHS}}(z)
+
\mathcal B_{\mathrm{noise}}(z)
+
\mathcal B_{\mathrm{res}}(z)
+
\mathcal B_{\mathrm{proj}}(z),
\end{equation}
where
\begin{align*}
\mathcal B_{\mathrm{RKHS}}(z)
&:=
\sigma_N(z)B, \\
\mathcal B_{\mathrm{noise}}(z)
&:=
\sigma_N(z)\frac{R}{\sigma_n}
\sqrt{2\ln\!\biggl(
\frac{1}{\delta}
\sqrt{\det\!\bigl(I+\frac{1}{\sigma_n^2}\Phi_N^\top\Phi_N\bigr)}
\biggr)}, \\
\mathcal B_{\mathrm{res}}(z)
&:=
\sigma_N(z)\frac{1}{\sigma_n}
\bigl\|\Phi_N^\top(Pg_{1:N}-g_{1:N})\bigr\|_{V_N^{-1}}, \\
\mathcal B_{\mathrm{proj}}(z)
&:=
|g(z)-Pg(z)|.
\end{align*}
For the full GP bound~\eqref{eq:full_gp_bound}, only the corresponding RKHS and noise contributions are present.

Figure~\ref{fig:bound_width_vs_freq} shows the average width of the resulting uncertainty bounds as a function of the number of retained frequencies.
The bars are decomposed according to the contributions defined above, with $\mathcal B_{\mathrm{res}}(z)$ and $\mathcal B_{\mathrm{proj}}(z)$ appearing only for the DTF-GP.
The results are consistent with the structure of the bound: the additional conservatism mainly depends on the number of retained frequencies (cf. Proposition~\ref{prop:projection_error_rbf}). 
Increasing the number of retained frequencies reduces the spectral truncation error and therefore decreases the projection-dependent terms $\mathcal B_{\mathrm{res}}(z)$ and $\mathcal B_{\mathrm{proj}}(z)$.
In the shown experiment, the term $\mathcal B_{\mathrm{proj}}(z)$ is included in the stacked bars but is small compared to the other contributions and therefore barely visible on the plotted scale.
In the limit of sufficiently many features, both projection-dependent terms vanish, and the DTF-GP confidence bound approaches the full GP confidence bound.

\begin{figure}
    \centering
    \includegraphics[width=0.45\textwidth]{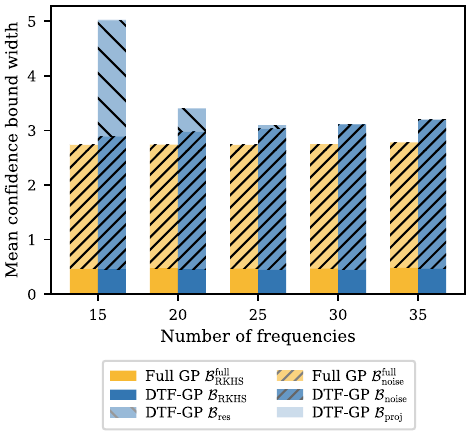}
    \caption{Uncertainty bound width averaged over $100$ random seeds as a function of the number of retained frequencies. The stacked bars show the contributions in~\eqref{eq:dtf_bound_decomposition}; for the full GP, the corresponding RKHS and noise contributions are defined and shown analogously.}
    \label{fig:bound_width_vs_freq}
\end{figure}

\section{Closing the loop: Safe learning-based MPC with DTF-GP}
\label{sec:safe_mpc}

We now integrate the proposed DTF-GP and the derived uniform uncertainty bound into a safe learning-based MPC framework. 
Section~\ref{sec:mpc_integration} describes the integration of the DTF-GP confidence bound into the SafeMPC formulation of
~\cite{kollerLearningbasedModelPredictive2018}.
Section~\ref{sec:safe_exploration} then introduces the resulting safe exploration strategy. 
Implementation details of the GP models, including kernel design, frequency selection, hyperparameter learning, and online updates, are discussed in Section~\ref{sec:mpc_implementation}. Finally, Section~\ref{sec:mpc_results} evaluates the proposed approach on a nonlinear control problem and compares it to the one based on a full GP in terms of exploration performance and computational efficiency.

\subsection{Integration into SafeMPC}
\label{sec:mpc_integration}

We now integrate the proposed DTF-GP confidence bound into the SafeMPC framework of~\cite{kollerLearningbasedModelPredictive2018}.
Following~\cite{kollerLearningbasedModelPredictive2018}, we assume the existence of a backup controller and a safe region.

\begin{assum}
\label{ass:safe_backup_controller}
We are given a backup controller $\pi_{\mathrm{safe}}$ and a safe region $\mathcal X_{\mathrm{safe}} \subseteq \mathcal X$ that is robust control positively invariant under $\pi_{\mathrm{safe}}$.
Moreover, the backup controller satisfies the input constraints inside $\mathcal X_{\mathrm{safe}}$, i.e., $\pi_{\mathrm{safe}}(x)\in\mathcal U \text{ for all } x\in\mathcal X_{\mathrm{safe}}$.
\end{assum}

At time~$i$, the controller propagates uncertainty sets $R_t$ over a finite prediction horizon of length $H$ and solves
\begin{subequations}
\label{eq:safempc_problem}
\begin{align}
\min_{\pi_0,\dots,\pi_{H-1}} \quad & J(R_0,\dots,R_H) \\
\text{s.t.}\quad
& R_{t+1} = \mathcal M(R_t,\pi_t), \quad t=0,\dots,H-1, \\
& R_t \subset \mathcal X, \quad t=1,\dots,H-1, \\
& \pi_t(R_t) \subset \mathcal U, \quad t=0,\dots,H-1, \\
& R_H \subset \mathcal X_{\mathrm{safe}},
\end{align}
\end{subequations}
where $R_0=\{x_i\}$, $J$ denotes a cost functional,  $\pi_t$ the optimized policy at prediction step $t$, and $\mathcal M$ denotes the uncertainty propagation map.

The propagation map $\mathcal M$ is obtained by combining a local linearization of the nominal model with confidence bounds on the learned model error $g$. It over-approximates the possible successor states by accounting for the linearization error, a Lipschitz error term on the model error $g$, and the GP confidence bound; see~\cite{kollerLearningbasedModelPredictive2018} for the detailed construction.
Thus, the role of the GP model is to provide high-probability confidence intervals for the model error $g$. In the original SafeMPC formulation, these bounds are obtained from a full GP. Here, we replace the full GP by the DTF-GP from Section~\ref{sec:scalable_gp}. For each state dimension, an independent GP is trained on observations of the corresponding component of the model error, as introduced in Section~\ref{sec:problem_statement}.
The confidence bound from Theorem~\ref{thm:uniform_bound}, together with the projection error bound from Proposition~\ref{prop:projection_error_rbf}, is then applied component-wise and used in $\mathcal M$ to propagate the confidence sets.

\begin{cor}
\label{cor:dtf_safempc_safety}
Suppose Assumption~\ref{ass:safe_backup_controller} holds and the SafeMPC problem~\eqref{eq:safempc_problem} is feasible at the initial state.
For each state dimension $j=1,\dots,n_x$, assume that the corresponding model error component satisfies the assumptions of Theorem~\ref{thm:uniform_bound}, and that the projection error is bounded as in Proposition~\ref{prop:projection_error_rbf}.
Then the closed-loop system under the DTF-GP SafeMPC controller is probabilistically safe in the sense that
\[
\mathbb P\!\left[
\forall i\in\mathbb N:
x_i\in\mathcal X,\;
u_i\in\mathcal U
\right]
\ge
1-n_x\delta .
\]
\end{cor}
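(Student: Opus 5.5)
The plan is to reduce the corollary to the deterministic safety argument of the SafeMPC framework of~\cite{kollerLearningbasedModelPredictive2018}, conditioned on a single high-probability event on which all confidence bounds hold. For each state dimension $j=1,\dots,n_x$, let $E_j$ be the event that, uniformly for all $N\ge1$ and all $z\in\mathcal Z$,
\[
\bigl|g_j(z)-\mu_{N,j}(z)\bigr|\le \sup_{z'\in\mathcal Z}\bigl|g_j(z')-Pg_j(z')\bigr| + \sigma_{N,j}(z)\,\beta_{N,j},
\]
where the projection error is replaced by the bound of Proposition~\ref{prop:projection_error_rbf}. This is also needed inside $\beta_{N,j}$, via $\|\Phi_N^\top(Pg_{1:N}-g_{1:N})\|_{V_N^{-1}}\le \sqrt{N}\,\sup_z|g-Pg|\cdot\|\Phi_N^\top\|_{V_N^{-1}}$ or an equivalent computable majorant. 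Theorem~\ref{thm:uniform_bound} gives $\mathbb P(E_j)\ge1-\delta$. The bound of Proposition~\ref{prop:projection_error_rbf} is deterministic, so replacing $|g-Pg|$ by it only enlarges the interval and does not change the probability.

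The key point is that Theorem~\ref{thm:uniform_bound} holds uniformly over $N$ and over predictable query sequences. In closed loop, the data points $z_i=(x_i,u_i)$ are chosen by the MPC based on past observations, hence are $\mathcal F_{i-1}$-predictable, and the noise is conditionally sub-Gaussian. So $E_j$ covers every data set collected online, including the one produced by the feedback loop itself. A union bound then gives $\mathbb P(\bigcap_j E_j)\ge 1-n_x\delta$.

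On the event $\bigcap_j E_j$, I would argue deterministically, following~\cite{kollerLearningbasedModelPredictive2018}.

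\begin{itemize}
\item \emph{Containment.} The propagation map $\mathcal M$ is built from the nominal linearization with its error bound, the Lipschitz term on $g$, and the componentwise confidence intervals. It therefore over-approximates the true successor set, so the true trajectory under the applied policies stays in the sets $R_t$.
\item \emph{Recursive feasibility.} Take the standard shifted candidate: the previous optimal policies for $t=1,\dots,H-1$, appended with $\pi_{\mathrm{safe}}$. The terminal set $R_H\subset\mathcal X_{\mathrm{safe}}$ is robust positively invariant under $\pi_{\mathrm{safe}}$ by Assumption~\ref{ass:safe_backup_controller}, and $\pi_{\mathrm{safe}}$ satisfies the input constraints there. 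The shifted candidate is therefore feasible, given that the new confidence sets are contained in the old ones, or else that the fallback of applying the stored safe policy sequence is used.
\item \emph{Induction.} Starting from initial feasibility, induction on $i$ gives $x_i\in\mathcal X$ and $u_i\in\mathcal U$ for all $i$ on this event.
\end{itemize}

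The step I expect to be the main obstacle is the recursive feasibility argument. The confidence bounds change as data arrive, and the DTF-GP bound is not monotone in $N$. The remedy I would try first is to inherit the SafeMPC mechanism: if the new problem is infeasible, apply the previously computed safe policy sequence, whose validity is certified on $\bigcap_j E_j$ because the bounds hold for all $N$ simultaneously. This makes safety independent of monotonicity. A secondary check is that the Lipschitz constant of $g$ used in $\mathcal M$ is implied by $\|g\|_{\mathcal H_{k_\mathrm{DTF}}}\le B$ and the smooth DTF kernel, which holds deterministically.
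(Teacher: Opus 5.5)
Your proposal is correct and follows essentially the same route as the paper: a union bound over the $n_x$ componentwise events from Theorem~\ref{thm:uniform_bound}, which hold uniformly in $N$ and therefore also for data collected predictably in closed loop. On the resulting event you then apply the deterministic SafeMPC argument of~\cite{kollerLearningbasedModelPredictive2018}: $\mathcal M$ over-approximates the true successor sets, and the terminal constraint together with the backup controller gives safety for all times. Your extra details make explicit what the paper leaves to that citation: the computable majorant for the residual term in $\beta_N$, and the fallback to the stored safe sequence, which only needs the bounds that certified that sequence to hold on $\bigcap_j E_j$.
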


\begin{pf*}{Proof.}
By Theorem~\ref{thm:uniform_bound}, applied component-wise with confidence levels $\delta$, the DTF-GP confidence intervals contain the true model error uniformly over all $z\in\mathcal Z$ and all data sizes $N\ge1$ with probability at least $1-n_x \delta$.
On this event, the uncertainty propagation map $\mathcal M$ over-approximates the true closed-loop successor states.
The constraints in~\eqref{eq:safempc_problem} therefore imply satisfaction of the state and input constraints along the predicted uncertainty sets.
Moreover, the terminal constraint $R_H\subset\mathcal X_{\mathrm{safe}}$ and Assumption~\ref{ass:safe_backup_controller} ensure recursive feasibility through the safe backup controller.
Consequently, the SafeMPC safety argument of~\cite{kollerLearningbasedModelPredictive2018} applies with the DTF-GP confidence bounds in place of the full GP bounds, yielding the stated probability guarantee. \hfill$\blacksquare$
\end{pf*}

\subsection{Safe exploration}
\label{sec:safe_exploration}

We use (DTF-GP) SafeMPC formulation for safe exploration. The objective is chosen to favor trajectories that visit regions of high model uncertainty while satisfying all state, input, and terminal safety constraints. Specifically, the stage cost is selected as
\begin{equation*}
J(R_0,\dots,R_H)
=
-\sum_{t=0}^{H-1}\sum_{j=1}^{n_x}
\sigma_{N,j}(\bar z_t),
\end{equation*}
where $\bar z_t$ denotes the nominal state--input pair associated with the uncertainty set $R_t$, and $\sigma_{N,j}$ is the posterior standard deviation of the GP model for the $j$-th state component. This encourages exploration of uncertain regions while the constraints in~\eqref{eq:safempc_problem} enforce safety.

We consider both static and dynamic exploration. In static exploration, the system is reset after each exploration step and the initial state is optimized jointly with the input sequence.
This setting is used to compare the runtimes of the DTF-GP and the full GP.
In dynamic exploration, the optimized input is applied to the system, and the resulting state is used as the initial condition for the next MPC iteration.
We use this setting to compare exploration performance.
Note that, by construction, the safety guarantee of Corollary~\ref{cor:dtf_safempc_safety} applies to both settings.

\subsection{Gaussian process implementation details}
\label{sec:mpc_implementation}

We implement both a full GP and the DTF-GP using plain \texttt{NumPy} to ensure a fair comparison. Both models use the same data, hyperparameter learning procedure, and differ only in the kernel representation.

\paragraph{Kernel.}
We employ a composite kernel consisting of a linear kernel and an RBF kernel
\[
k(z,z') = k_{\mathrm{lin}}(z,z') + k_{\mathrm{RBF}}(z,z').
\]
This choice has been made to capture also linear functions, which do not belong to the RKHS of the squared-exponential kernel consisting of smooth, stationary functions with rapidly decaying spectra~\cite{steinwartSupportVectorMachines2008}.
For the DTF-GP, the corresponding feature map is obtained by concatenating the linear features with the trigonometric features approximating the RBF component. Since the linear component is represented exactly, the projection error remains unchanged and arises only from the approximation of the RBF component.

\paragraph{Frequency selection.}
\label{sec:impl_frequency_selection}
The RBF kernel is approximated using the trigonometric feature representation introduced in Section~\ref{sec:scalable_gp}, with frequencies selected according to the ellipsoidal truncation~\eqref{eq:frequency_selection}.
The periods $T=(T_1,\dots,T_d)$ are chosen as $1.2$ times the domain length in each dimension to mitigate boundary effects due to the periodicity of the kernel, as discussed in Remark~\ref{rem:periodicity}.
The truncation radius is chosen adaptively based on a prescribed target projection error. In particular, $r$ is computed from the projection error bound in Proposition~\ref{prop:projection_error_rbf} as the smallest value for which the bound falls below the desired target projection error.

\paragraph{Uncertainty bounds.}
For the full GP, we use the high-probability confidence bound introduced in~\eqref{eq:full_gp_bound}. For the DTF-GP, we use the uncertainty bound from Theorem~\ref{thm:uniform_bound}, where the projection error is evaluated using Proposition~\ref{prop:projection_error_rbf}. We conservatively estimate the RKHS norm bound $B$ in a separate offline procedure using noise-free samples of the model mismatch.

\paragraph{Hyperparameter learning.}
For both models, hyperparameters (kernel parameters and noise variance) are learned by maximizing the marginal likelihood on the initial safe dataset and then kept fixed during exploration.
For the DTF-GP, we exploit the structure of the feature representation to cache data-dependent quantities. In particular, the feature matrix can be decomposed into a data-dependent basis evaluation and a parameter-dependent weighting matrix, which enables precomputing and reusing the data-dependent terms during hyperparameter optimization. This reduces the cost of repeated marginal likelihood evaluations from $\mathcal{O}(N M^2 + M^3)$ to $\mathcal{O}(M^3)$.
Further details are provided in Appendix~\ref{app:caching}.

\paragraph{Online updates.}
Online updates are performed during static exploration, where newly collected samples are incorporated after each SafeMPC iteration.
In dynamic exploration, the GP is trained once on the initial safe dataset and kept fixed during the rollout.
For the full GP, adding one sample extends the regularized Gram matrix by one row and column. The Cholesky factor is updated using a triangular solve, resulting in cost $\mathcal O(N^2)$ per sample.
For the DTF-GP, adding one sample corresponds to a rank-one update of the feature-space matrix $V_N$, whose Cholesky factor can be updated with cost $\mathcal O(M^2)$ per sample~\cite[Sec.~6.5.4]{golubMatrixComputations2013}.
Table~\ref{tab:complexities} summarizes the computational complexities of the main GP operations, highlighting the theoretical computational advantage of the DTF-GP when $N \gg M$.

\begin{table}[h!]
\centering
\caption{Computational complexities of the full GP and the DTF-GP, with $N$ denoting the number of data points and $M$ the number of DTF-GP features.}
\label{tab:complexities}
\vspace{0.8em}
\begin{tabular}{lcc}
\hline
Operation & Full GP & DTF-GP \\
\hline
Marginal likelihood evaluation & $\mathcal O(N^3)$ & $\mathcal O(M^3)$ \\
Online update per sample & $\mathcal O(N^2)$ & $\mathcal O(M^2)$ \\
Prediction per test point & $\mathcal O(N)$ & $\mathcal O(M)$ \\
\hline
\end{tabular}
\end{table}

\subsection{Numerical results}
\label{sec:mpc_results}

We evaluate the proposed MPC framework on an inverted pendulum system with dynamics $ml^2\ddot{\vartheta}= m g l \sin(\vartheta) + u$, where $m=0.15\,\mathrm{kg}$ is the pendulum mass, $l=0.5\,\mathrm{m}$ is the pendulum length, and $g=9.81\,\mathrm{m/s^2}$ is the gravitational constant. The state is $x=[\dot{\vartheta},\vartheta]^\top$, and the scalar input is $u$.
The nominal model is obtained by linearizing the system dynamics around the upright equilibrium using a slightly perturbed mass $m_\mathrm{nom}=0.149\,\mathrm{kg}$. This induces a model mismatch that is learned by the GP.
The MPC sampling time is $\Delta t=0.05\,\mathrm{s}$, and the prediction horizon is $H=4$ for dynamic exploration and $H=2$ for static exploration.
The GP models are trained on the domain $\mathcal Z = [-3.0,\,3.0]\times[-1.25,\,1.25]\times[-1.0,\,1.0]$ with observation noise $\sigma_n=[5\cdot10^{-4},\ 5\cdot10^{-5}]^\top$ and RKHS norm bounds $B=[20.0,\,20.0]$.
The projection error targets used for frequency selection (cf.~Section~\ref{sec:impl_frequency_selection}) are set to $[5\!\cdot\!10^{-6},\,5\!\cdot\!10^{-7}]$.
The safe region $\mathcal X_{\mathrm{safe}}$ as assumed in Assumption~\ref{ass:safe_backup_controller} is chosen as the diamond-shaped set
$
\mathcal X_{\mathrm{safe}}
=
\mathrm{conv}\!\left\{
(-1.2,\,20^\circ),\,
(0.8,\,0),\,
(1.2,\,-20^\circ),\,
(-0.8,\,0)
\right\}$.
This safe set, together with the domain bounds, is visualized in Figure~\ref{fig:exploration_trajectories}.
The initial GP models are trained on samples generated inside the safe region. 
The DTF-GP is compared to a full GP in terms of exploration performance and computational efficiency. 
All experiments use confidence level $1-\delta=0.95$. Static exploration results are averaged over $100$ random seeds, while dynamic exploration results are averaged over $50$ random seeds.

\paragraph{Exploration performance.} 
We use mutual information between the collected observations and the unknown model error under a fixed reference GP prior as a measure of sample informativeness~\cite{srinivasGaussianProcessOptimization2010}. Higher mutual information values indicate that the collected data reduce uncertainty about the unknown model error more strongly, and therefore correspond to better exploration performance.
Figure~\ref{fig:exploration_trajectories} shows representative dynamic exploration trajectories for the full GP and the DTF-GP in the state space.
Both methods explore similar regions while respecting the domain bounds and the terminal invariant safe region.
The corresponding quantitative comparison is shown in Figure~\ref{fig:info_gain_dynamic_comparison}, which reports the cumulative mutual information obtained during dynamic exploration. By computing the truncation radius according to a prescribed projection error target (cf. Section~\ref{sec:impl_frequency_selection}), the DTF-GP achieves exploration performance comparable to that of the full GP across all considered dataset sizes. This indicates that the additional conservatism introduced by the finite-dimensional approximation of the kernel can be controlled through a sufficiently large number of retained frequencies.

\begin{figure}
    \centering
    \includegraphics[width=0.5\textwidth]{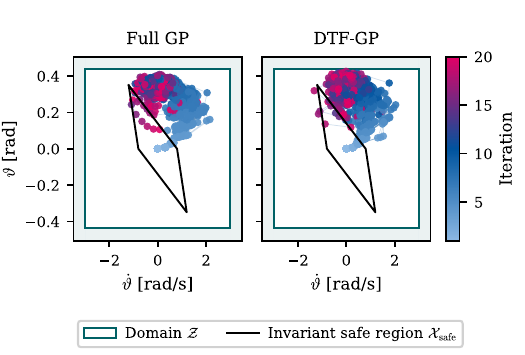}
    \caption{50 representative dynamic exploration trajectories of the full GP and the DTF-GP for $N_{\mathrm{init}}=400$ initial safe samples.}
    \label{fig:exploration_trajectories}
\end{figure}

\begin{figure*}
    \centering
    \includegraphics[width=\textwidth]{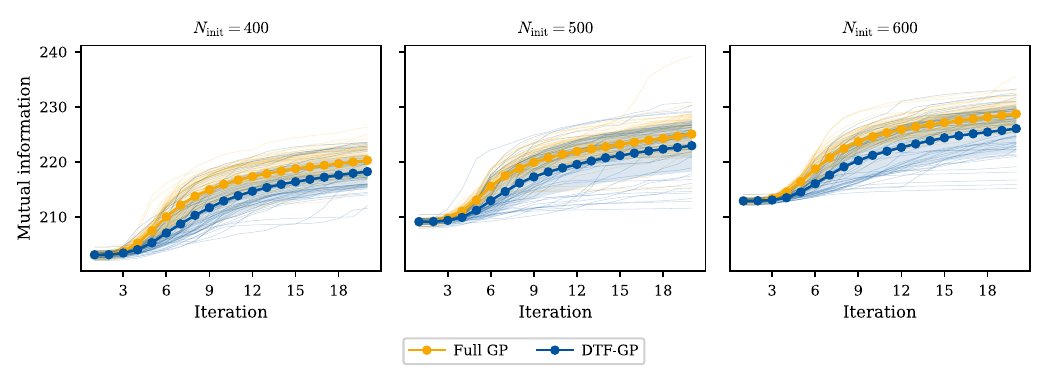}
    \caption{Cumulative mutual information during dynamic exploration for different numbers of initial safe samples $N_{\mathrm{init}}$. Results are averaged over $50$ random seeds.}
    \label{fig:info_gain_dynamic_comparison}
\end{figure*}

\paragraph{Computational advantage.}
The prescribed projection error target leads to a varying number of retained frequencies depending on the estimated kernel hyperparameters (in particular, the spectral decay rates that determine the complexity of the learned function class). Since the computational cost of the DTF-GP depends primarily on the feature dimension, this leads to varying runtimes across different runs. This variability is visible in Figure~\ref{fig:timing_static}, where the DTF-GP exhibits a larger spread of individual runtimes than the full GP. Note that one could instead prescribe a fixed number of retained frequencies, yielding stable runtimes at the expense of varying approximation accuracy and conservatism.
Figure~\ref{fig:timing_static} reports the average computation time per SafeMPC iteration during static exploration, decomposed into GP online updates and MPC solve time, where the latter includes the GP predictions evaluated within the CasADi solver. The DTF-GP shows a larger spread of individual runtimes than the full GP due to the dependence of the feature dimension on the estimated hyperparameters. In contrast to the full GP, the computation time of the DTF-GP exhibits almost no dependence on the number of initial samples. This behavior is consistent with the theoretical complexity of the DTF-GP, which scales with feature dimension rather than dataset size.
Overall, the DTF-GP becomes computationally advantageous for larger datasets while maintaining exploration performance comparable to the full GP.
This provides a practical use case for the DTF-GP in noise-rich or high-confidence-level MPC settings, where large datasets are required to drive uncertainty to low levels.

\begin{figure}
    \centering
    \includegraphics[width=0.45\textwidth]{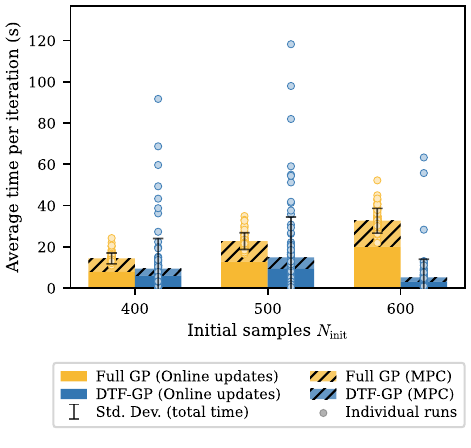}
    \caption{Average computation time per SafeMPC iteration during static exploration as a function of the number of initial safe samples $N_{\mathrm{init}}$ averaged over $100$ random seeds.}
    \label{fig:timing_static}
\end{figure}



\section{Conclusions}
\label{sec:conclusions}

This paper presented the DTF-GP, a scalable GP framework for safe learning-based MPC with high-probability uniform uncertainty guarantees. 
The DTF-GP is based on a finite-dimensional trigonometric kernel approximation using a deterministic frequency grid, which reduces GP inference to Bayesian linear regression in feature space. 
The main theoretical contribution is a uniform uncertainty bound for the DTF-GP. 
For general finite-dimensional kernel approximations, the bound requires an upper bound on the corresponding projection error in addition to the full GP confidence-bound assumptions.
For the RBF kernel, we derived such a projection error bound explicitly, so that the resulting DTF-GP confidence bound relies on the same assumptions as full GP confidence bounds.
We integrated the DTF-GP and its uncertainty bound into a safe learning-based MPC framework, thereby retaining high-probability safety guarantees.
Numerical results on an inverted pendulum system show that the DTF-GP achieves exploration performance comparable to a full GP while reducing computational cost in large-data regimes.
Future work includes extending the analysis to other stationary kernels as well as deriving alternative spectral parametrizations and frequency truncation strategies.

\begin{ack}                               
Anna Scampicchio acknowledges that this work was partially supported by the Wallenberg AI, Autonomous Systems and
Software Program (WASP) funded by the Knut and Alice Wallenberg Foundation. Johanna Menn acknowledges that this work was partially supported by the Deutsche Forschungsgemeinschaft (DFG, German Research Foundation) under Germany's Excellence Strategy -- EXC-2023 Internet of Production -- 390621612, project 556142469 (ParDyBO) and was performed partially within the Helmholtz School for Data Science in Life, Earth and Energy (HDS-LEE). 
\end{ack}

\bibliographystyle{plain}        

\section*{Appendix}

\appendix

The appendix provides supplementary derivations and implementation details omitted from the main text. Appendix~\ref{app:mercer} gives the Mercer interpretation of the DTF kernel. Appendix~\ref{app:periodization} relates the DTF kernel to periodized stationary kernels, and Appendix~\ref{app:periodic_extension} bounds the resulting periodic extension error. Appendices~\ref{app:proof_uniform_bound} and~\ref{app:proof_proj_error} contain the proofs of the uniform uncertainty bound and the RBF projection error bound. Finally, Appendix~\ref{app:caching} describes the caching strategy used for hyperparameter learning.

\section{Mercer decomposition of the DTF kernel}
\label{app:mercer}

Consider the compact domain $\mathcal{Z} \doteq\prod_{j=1}^d [0,T_j]$ equipped with the normalized Lebesgue measure $d\mu(z)=\frac{1}{|\mathcal{Z}|}dz$. The basis functions defined in~\eqref{eq:trig_basis_functions} satisfy
\[
\int_{\mathcal Z}
\varphi_m(z')\varphi_j(z')\,\mathrm d\mu(z') 
=
\delta_{mj}
\]
and therefore form an orthonormal system in the space of square-integrable functions $L^2(\mathcal{Z},\mu)$.

Let $\mathcal T_{k_\mathrm{DTF}}$ denote the integral operator associated with $k_\mathrm{DTF}$:
\[
(\mathcal T_{k_\mathrm{DTF}} f)(z)
=
\int_{\mathcal{Z}}
k_\mathrm{DTF}(z,z')f(z')\,\mathrm d\mu(z').
\]
Using the DTF kernel representation~\eqref{eq:infinite_kernel} and orthonormality of $\{\varphi_m\}_{m\ge0}$, we obtain
\begin{align*}
(\mathcal T_{k_\mathrm{DTF}}\varphi_j)(z)
&=
\sum_{m=0}^{\infty}
\lambda^{(\lceil m/2\rceil)}
\varphi_m(z)
\int_{\mathcal{Z}}
\varphi_m(z')\varphi_j(z')\,\mathrm d\mu(z') \\
&=
\lambda^{(\lceil j/2\rceil)}\varphi_j(z).
\end{align*}
Thus, $\varphi_j$ is an eigenfunction of $\mathcal T_{k_\mathrm{DTF}}$ with eigenvalue $\lambda^{(\lceil j/2\rceil)}$.

\section{Relation of the DTF kernel to periodized kernels}
\label{app:periodization}

In this appendix, we relate the infinite-dimensional DTF kernel introduced in Section~\ref{sec:scalable_gp} to the canonical periodization of a stationary kernel. Let $k$ be a continuous stationary kernel with spectral density $S(\omega)$, i.e.,
\begin{equation*}
k(\tau)
=
\int_{\mathbb R^d}
e^{i2\pi \omega^\top \tau}\,
S(\omega)\,
\mathrm d\omega,
\qquad
\tau=z-z'.
\end{equation*}
For a period vector $T=(T_1,\dots,T_d)$, define the canonical periodization of $k$ by
\begin{equation}
\label{app:periodized_kernel}
k_{\mathrm{per}}(\tau)
\doteq
\sum_{n\in\mathbb Z^d}
k(\tau+n\odot T),
\end{equation}
where $\odot$ denotes the componentwise product. The resulting kernel is periodic with period $T$ in each coordinate and therefore admits a Fourier series representation on the torus
\[
\mathbb T^d
\doteq
\prod_{j=1}^d [0,T_j]
\]
of the form
\begin{equation}
\label{app:fourier_series}
k_{\mathrm{per}}(\tau)
=
\sum_{q\in\mathbb Z^d}
c_q\,
e^{i2\pi \omega_q^\top \tau},
\quad
\omega_q
=
\left(
\frac{q_1}{T_1},
\dots,
\frac{q_d}{T_d}
\right).
\end{equation}
The Fourier coefficients are given by the following expression.
Substituting the definition of $k_{\mathrm{per}}$~\eqref{app:periodized_kernel}, interchanging the sum and the integral, and using a change of variables $s=\tau+n\odot T$ yields
\begin{align*}
c_q
&=
\frac{1}{\prod_{j=1}^d T_j}
\int_{\mathbb T^d}
k_{\mathrm{per}}(\tau)\,
e^{-i2\pi \omega_q^\top \tau}
\,\mathrm d\tau \\
&=
\frac{1}{\prod_{j=1}^d T_j}
\int_{\mathbb T^d}
\sum_{n\in\mathbb Z^d}
k(\tau+n\odot T)\,
e^{-i2\pi \omega_q^\top\tau}
\,\mathrm d\tau \\
&=
\frac{1}{\prod_{j=1}^d T_j}
\sum_{n\in\mathbb Z^d}
\int_{\mathbb T^d}
k(\tau+n\odot T)\,
e^{-i2\pi \omega_q^\top\tau}
\,\mathrm d\tau \\
&=
\frac{1}{\prod_{j=1}^d T_j}
\sum_{n\in\mathbb Z^d}
\int_{\mathbb T^d+n\odot T}
k(s)\,
e^{-i2\pi \omega_q^\top(s-n\odot T)}
\,\mathrm ds \\
&=
\frac{1}{\prod_{j=1}^d T_j}
\sum_{n\in\mathbb Z^d}
\int_{\mathbb T^d+n\odot T}
k(s)\,
e^{-i2\pi \omega_q^\top s} e^{i2\pi q^\top n}
\,\mathrm ds \\
&=
\frac{1}{\prod_{j=1}^d T_j}
\int_{\mathbb R^d}
k(s)\,
e^{-i2\pi \omega_q^\top s}
\,\mathrm ds.
\end{align*}
In the fourth equality, we used the definition of $\omega_q$ in~\eqref{app:fourier_series}, which gives $\omega_q^\top(n\odot T)=q^\top n$, and hence $e^{i2\pi q^\top n}=1$ for $q,n\in\mathbb Z^d$.
By the definition of the spectral density, this gives
\begin{equation*}
c_q
=
\frac{1}{\prod_{j=1}^d T_j}
S(\omega_q).
\end{equation*}
Using Euler's identity and the half-lattice introduced in~\eqref{eq:half_lattice}, the Fourier series~\eqref{app:fourier_series} can be written in real form.
Since the kernel is real-valued and even, the Fourier coefficients satisfy $c_{-q}=c_q\in\mathbb R$.
Hence, recalling that $\tau = z - z'$,
\begin{align}
\label{app:real_fourier_series}
k_{\mathrm{per}}(z,z')
&=
c_0
+
\sum_{q\in\mathbb Z^d_+}
\left(
c_q e^{i2\pi\omega_q^\top\tau}
+
c_{-q} e^{i2\pi\omega_{-q}^\top\tau}
\right) \nonumber\\
&=
c_0
+
\sum_{q\in\mathbb Z^d_+}
c_q
\left(
e^{i2\pi\omega_q^\top\tau}
+
e^{-i2\pi\omega_q^\top\tau}
\right) \nonumber\\
&=
c_0
+
2\sum_{q\in\mathbb Z^d_+}
c_q
\Bigl(
\cos(2\pi\omega_q^\top z)
\cos(2\pi\omega_q^\top z')
\nonumber\\
&\qquad\qquad
+
\sin(2\pi\omega_q^\top z)
\sin(2\pi\omega_q^\top z')
\Bigr).
\end{align}
Comparing~\eqref{app:real_fourier_series} with the kernel representation in~\eqref{eq:infinite_kernel}, we observe that the two kernels $k_{\mathrm{per}}$ and $k_\mathrm{DTF}$ coincide if the weights are chosen as
\begin{equation*}
\label{eq:app_weights_periodic}
\lambda_q
=
c_q,
\qquad
q\in\mathbb Z^d_+.
\end{equation*}
Since the proposed parametrization of the weights (cf. Section~\ref{sec:spectral_parametrization}) mirrors the decay structure of the spectral density, the kernel $k_\mathrm{DTF}$ can in principle reproduce the canonical periodization $k_\mathrm{per}$ of the original stationary kernel. In particular, for kernels with rapidly decaying tails, such as the RBF kernel, the error introduced by the canonical periodization becomes negligible when the period $T$ is chosen sufficiently larger than the domain of interest. We provide a quantitative characterization of this phenomenon in Appendix~\ref{app:periodic_extension} below. 

\section{Periodic extension and approximation of non-periodic functions}
\label{app:periodic_extension}

The DTF kernel introduced in Section~\ref{sec:scalable_gp} is periodic by construction. Consequently, all functions contained in the associated RKHS are periodic as well. In many applications, however, the unknown function of interest is not periodic on the domain where predictions are required. In this appendix, we show that the periodicity assumption induces a controllable approximation error when the period is chosen sufficiently larger than the domain of interest. We present the argument for the one-dimensional squared-exponential kernel, although the same reasoning extends to higher dimensions and to other kernels with rapidly decaying tails.

Let 
$\mathcal Z=[-L/2,L/2]$
denote the domain of interest and let $T>L$ denote the chosen period.
We consider the one-dimensional version of the RBF kernel defined in~\eqref{eq:rbf_kernel} with prior variance $\sigma_f^2$ and lengthscale $\ell_1=\ell>0$.
Assume that the unknown function admits a kernel expansion
\begin{equation*}
\label{eq:app_kernel_expansion}
g(z)
=
\sum_{i=1}^\infty
w_i
k(z_i,z),
\qquad
\sum_{i=1}^\infty |w_i|<\infty,
\end{equation*}
for representing points $z_i\in\mathcal Z$.
We define the canonical periodic extension
\begin{equation*}
\label{eq:app_periodic_extension}
g_{\mathrm{per}}(z)
\doteq
\sum_{n\in\mathbb Z}
g(z+nT).
\end{equation*}
By construction, $g_{\mathrm{per}}$ is $T$-periodic. Moreover, since the canonical periodized kernel $k_{\mathrm{per}}$ coincides with the DTF kernel $k_\mathrm{DTF}$ for appropriately chosen spectral weights (cf.~Appendix~\ref{app:periodization}), the function $g_{\mathrm{per}}$ lies in the RKHS induced by $k_\mathrm{DTF}$ and can therefore be represented by the trigonometric feature expansion introduced in Section~\ref{sec:scalable_gp}.
The difference between $g$ and its periodic extension satisfies
\begin{align*}
|g(z)-g_{\mathrm{per}}(z)|
&=
\left|
\sum_{n\in\mathbb Z\setminus\{0\}}
g(z+nT)
\right| \\
&\le
\sum_{n\in\mathbb Z\setminus\{0\}}
\sum_{i=1}^\infty
|w_i|\,
|k(z_i,z+nT)|.
\end{align*}
For $z\in[-L/2,L/2]$ and $|n| \ge 1$, we obtain
\[
|z_i-(z+nT)|
\ge
|n|T-L.
\]
Using the squared-exponential kernel form therefore yields
\[
|k(z_i,z+nT)|
\le
\sigma_f^2
\exp\!\left(
-\frac{(|n|T-L)^2}{2\ell^2}
\right).
\]
Substituting this estimate gives
\begin{align*}
\label{eq:app_periodic_error_sum}
|g(z)-g_{\mathrm{per}}(z)|
&\le
\sigma_f^2
\sum_{i=1}^\infty |w_i|
\sum_{n\in\mathbb Z\setminus\{0\}}
\exp\!\left(
-\frac{(|n|T-L)^2}{2\ell^2}
\right) \\
&=
2\sigma_f^2
\sum_{i=1}^\infty |w_i|
\sum_{n=1}^\infty
\exp\!\left(
-\frac{(nT-L)^2}{2\ell^2}
\right).
\end{align*}
Since the summand is positive and monotonically decreasing for $n\ge 1$, the tail after the first term can be bounded by
\begin{align*}
&\sum_{n=1}^\infty
\exp\!\left(
-\frac{(nT-L)^2}{2\ell^2}
\right)\\
&\le
\exp\!\left(
-\frac{(T-L)^2}{2\ell^2}
\right) +
\frac{1}{T}
\int_{T-L}^{\infty}
\exp\!\left(
-\frac{u^2}{2\ell^2}
\right)\,\mathrm du.
\end{align*}

Using the Gaussian tail bound, also known as Mills' inequality,
\[
\int_a^\infty e^{-u^2}\,\mathrm du
\le
\frac{1}{2a}
e^{-a^2},
\]
we obtain
\begin{align}
&\frac{1}{T}
\int_{T-L}^{\infty}
\exp\!\left(
-\frac{u^2}{2\ell^2}
\right)\,\mathrm du \notag \\
&\le
\frac{\ell^2}{T(T-L)}
\exp\!\left(
-\frac{(T-L)^2}{2\ell^2}
\right).\notag
\end{align}
Therefore,
\begin{align*}
\label{eq:app_periodic_error_final}
&|g(z)-g_{\mathrm{per}}(z)| \\
&\qquad \quad \le
2\sigma_f^2 
C_w
\left(1+\frac{\ell^2}{T(T-L)}\right)
\exp\!\left(
-\frac{(T-L)^2}{2\ell^2}
\right),
\end{align*}
where
\[
C_w \doteq \sum_{i=1}^\infty |w_i|.
\]
Hence, the approximation error induced by the periodic extension decays exponentially as the distance between the domain of interest and the period increases. Consequently, by choosing the period moderately larger than the domain, the periodic DTF kernel provides an accurate approximation of non-periodic functions generated by the original stationary kernel.

\section{Proof of Theorem~\ref{thm:uniform_bound}}
\label{app:proof_uniform_bound}

As in~\cite{chowdhuryKernelizedMultiarmedBandits2017}, we start from the observation model~\eqref{eq:model_error_observations} and decompose the output vector as $y_{1:N}=g_{1:N}+\varepsilon_{1:N}$. This yields
\begin{align*}
|g(z)-\mu_N(z)|
&=
\left|g(z)-\phi(z)^\top V_N^{-1}\Phi_N^\top y_{1:N}\right| \\
&\le \eta(z)+\mathcal E(z),
\end{align*}
where
\begin{align*}
\eta(z)
&\doteq
\left|\phi(z)^\top V_N^{-1}\Phi_N^\top \varepsilon_{1:N}\right|,\\
\mathcal E(z)
&\doteq
\left|g(z)-\phi(z)^\top V_N^{-1}\Phi_N^\top g_{1:N}\right|.
\end{align*}
The term $\eta(z)$ captures the effect of measurement noise, while $\mathcal{E}(z)$ is a deterministic approximation error arising from finite-dimensional regression.
We first bound the noise term. By the Cauchy--Schwarz inequality in the $V_N^{-1}$-weighted norm,
\begin{align*}
\eta(z)
&=
\left|\phi(z)^\top V_N^{-1}\Phi_N^\top \varepsilon_{1:N}\right| \\
&\le
\|\phi(z)\|_{V_N^{-1}}
\left\|\Phi_N^\top \varepsilon_{1:N}\right\|_{V_N^{-1}},
\end{align*}
where $\|\mathfrak v\|_A=\sqrt{\mathfrak v^\top A \mathfrak v}$ for some vector $\mathfrak v$ and a positive definite matrix $A$ of suitable dimension.
Recalling the posterior variance~\eqref{eq:feature_posterior_variance} of the finite-dimensional GP, we have
\begin{equation}
\label{app:std_dev_bound}
\|\phi(z)\|_{V_N^{-1}} = \frac{\sigma_N(z)}{\sigma_n}.
\end{equation}
It remains to bound $\|\Phi_N^\top \varepsilon_{1:N}\|_{V_N^{-1}}$. In the notation of the self-normalized concentration inequality in~\cite[Corollary~3.5]{abbasiyadkoriOnlineLearningLinearly2012} and~\cite[Theorem~1]{whitehouseSublinearRegretGPUCB2023}, we identify the predictable feature process with $\phi(z_i)$ and set
\[
S_N \doteq \sum_{i=1}^N \phi(z_i)\varepsilon_i
= \Phi_N^\top \varepsilon_{1:N}.
\]
Moreover, using the definition of $V_N$ in~\eqref{eq:regularized_feature_matrix},
\[
V_N = \sigma_n^2 I + \sum_{i=1}^N \phi(z_i)\phi(z_i)^\top
= \sigma_n^2 I + \Phi_N^\top\Phi_N .
\]
Since $(z_i)$ is predictable and $(\varepsilon_i)$ is conditionally $R$-sub-Gaussian given $\mathcal F_{i-1}$, the self-normalized inequality~\cite[Lemma~1]{whitehouseSublinearRegretGPUCB2023} implies that, with probability at least $1-\delta$, simultaneously for all $N\ge 1$,
\begin{equation*}
\left\|\Phi_N^\top \varepsilon_{1:N}\right\|_{V_N^{-1}}
\le
R
\sqrt{
2\ln\!\left(
\frac{1}{\delta}
\sqrt{
\det\!\left(
I+\frac{1}{\sigma_n^2}\Phi_N^\top\Phi_N
\right)}
\right)
}.
\end{equation*}
Consequently,
\begin{equation}
\label{eq:app_noise_bound}
\eta(z)
\le
\sigma_N(z)
\frac{R}{\sigma_n}
\sqrt{
2\ln\!\left(
\frac{1}{\delta}
\sqrt{
\det\!\left(
I+\frac{1}{\sigma_n^2}\Phi_N^\top\Phi_N
\right)}
\right)
}.
\end{equation}

We now bound the approximation term. By adding and subtracting both $Pg$ and $\phi(z)^\top V_N^{-1}\Phi_N^\top Pg_{1:N}$, we obtain
\begin{align}
\label{eq:app_approximation_term}
\mathcal E(z)
&\le |g(z)-Pg(z)| \nonumber\\
&\quad+
\left|Pg(z)-\phi(z)^\top V_N^{-1}\Phi_N^\top Pg_{1:N}\right|\nonumber\\
&\quad+
\left|\phi(z)^\top V_N^{-1}\Phi_N^\top(Pg_{1:N}-g_{1:N})\right|.
\end{align}
We next derive explicit bounds for the second and third terms in~\eqref{eq:app_approximation_term}.
Since $Pg \in \mathcal H_{\tilde{k}_\mathrm{DTF}}$, there exists a coefficient vector
\mbox{$\theta \in \mathbb R^M$} such that 
\[
Pg(z) = \phi(z)^\top \theta=\sum_{m=0}^{M-1}\theta_m \sqrt{\lambda^{(\lceil m/2\rceil)}} \varphi_m (z).
\]
Moreover, the RKHS norm induced by the finite-dimensional kernel is given by
\[
\|Pg\|_{\mathcal H_{\tilde{k}_\mathrm{DTF}}} = \sum_{m=0}^{M-1} \frac{(\theta_m\sqrt{\lambda^{(\lceil m/2\rceil)}})^2}{\lambda^{(\lceil m/2\rceil)}} = \|\theta\|_2.
\]
Hence,
\begin{align}
\label{app:chowdhury_equivalent}
&\left|Pg(z)-\phi(z)^\top V_N^{-1}\Phi_N^\top Pg_{1:N}\right| \nonumber\\
&=
\left|\phi(z)^\top \theta
-
\phi(z)^\top V_N^{-1}\Phi_N^\top\Phi_N \theta
\right| \nonumber\\
&=
\left|\phi(z)^\top
\left(I - V_N^{-1}\Phi_N^\top\Phi_N\right)\theta
\right| \nonumber\\
&\stackrel{\eqref{eq:regularized_feature_matrix}}{=}
\left|\sigma_n^2 \phi(z)^\top V_N^{-1}\theta\right| \nonumber\\
&\le
\sqrt{\sigma_n^2\phi(z)^\top V_N^{-1}\sigma_n^2V_N^{-1}\phi(z)}
\|\theta\|_2 \nonumber\\
&\le
\sqrt{\sigma_n^2\phi(z)^\top V_N^{-1}V_NV_N^{-1}\phi(z)}
\|\theta\|_2 \nonumber\\
&=\sigma_N(z)\|Pg\|_{\mathcal H_{\tilde{k}_\mathrm{DTF}}}.
\end{align}
Since $P$ is the orthogonal projection onto the truncated RKHS $\mathcal H_{\tilde{k}_\mathrm{DTF}}$ as defined by~\eqref{eq:projection}, and recalling the RKHS norm~\eqref{eq:rkhs_norm}, we have
\begin{align}
\label{app:rkhs_bound_vector}
\|Pg\|_{\mathcal H_{\tilde{k}_\mathrm{DTF}}}^2
&=
\sum_{m=0}^{M-1} \frac{\alpha_m^2}{\lambda^{(\lceil m/2\rceil)}} \nonumber\\
&\le
\sum_{m=0}^\infty \frac{\alpha_m^2}{\lambda^{(\lceil m/2\rceil)}} \nonumber\\
&= \|g\|_{\mathcal H_{k_\mathrm{DTF}}}^2 \nonumber\\
&\le B^2.
\end{align}
Substituting \eqref{app:rkhs_bound_vector} into \eqref{app:chowdhury_equivalent} gives
\begin{equation}
\label{eq:app_rkhs_term}
\left|Pg(z)-\phi(z)^\top V_N^{-1}\Phi_N^\top Pg_{1:N}\right|
\le
\sigma_N(z)B.
\end{equation}
Let $r_N \doteq Pg_{1:N}-g_{1:N}$. By Cauchy--Schwarz and \eqref{app:std_dev_bound},
\begin{align}
\label{eq:app_proj_residual_bound}
\bigl|\phi(z)^\top V_N^{-1}\Phi_N^\top r_N\bigr|
&\le
\|\phi(z)\|_{V_N^{-1}}
\bigl\|\Phi_N^\top r_N\bigr\|_{V_N^{-1}} \nonumber\\
&=
\frac{\sigma_N(z)}{\sigma_n}
\bigl\|\Phi_N^\top r_N\bigr\|_{V_N^{-1}} .
\end{align}
Substituting~\eqref{eq:app_rkhs_term} and~\eqref{eq:app_proj_residual_bound} in~\eqref{eq:app_approximation_term} yields
\begin{align}
\label{eq:app_approximation_bound}
\mathcal E(z)
&\le
|g(z)-Pg(z)|
+
\sigma_N(z)B \nonumber\\
&\quad+
\frac{\sigma_N(z)}{\sigma_n}
\bigl\|\Phi_N^\top(Pg_{1:N}-g_{1:N})\bigr\|_{V_N^{-1}} .
\end{align}
Combining~\eqref{eq:app_noise_bound} and~\eqref{eq:app_approximation_bound} gives~\eqref{eq:final_uniform_bound}.

\section{Proof of Proposition~\ref{prop:projection_error_rbf}}
\label{app:proof_proj_error}

From the expansion of $g$ and the definition of $P$,
\begin{align*}
|g(z)-Pg(z)|
&=
\left|
\sum_{m=0}^{\infty} \alpha_m \varphi_m(z)
-
\sum_{m=0}^{M-1} \alpha_m \varphi_m(z)
\right| \\
&=
\left|
\sum_{m=M}^{\infty} \alpha_m \varphi_m(z)
\right| \\
&\le
\sqrt{2} \sum_{m=M}^{\infty} |\alpha_m| \\
&\le
\sqrt{2}
\left(
\sum_{m=M}^{\infty}
\frac{\alpha_m^2}{\lambda^{(\lceil m/2\rceil)}}
\right)^{1/2} \\
&\qquad \times
\left(
\sum_{m=M}^{\infty}
\lambda^{(\lceil m/2\rceil)}
\right)^{1/2} \\
&\le
\sqrt{2}\, B
\left(
\sum_{m=M}^{\infty}
\lambda^{(\lceil m/2\rceil)}
\right)^{1/2}.
\end{align*}

Since each frequency $\omega_q$ corresponds to two basis functions (sine and cosine), it holds that
\[
\sum_{m=M}^{\infty}
\lambda^{(\lceil m/2\rceil)}
=
\sum_{q^\top \widetilde D q > r^2} 2\lambda_q,
\]
and hence
\begin{equation}
\label{eq:app_tail_sum}
\sup_{z\in\mathcal Z}|g(z)-Pg(z)|
\le
2B
\left(
\sum_{q^\top \widetilde D q > r^2} \lambda_q
\right)^{1/2}.
\end{equation}

Using the parametrization of the weights~\eqref{eq:lambda_decay}, it remains to bound the spectral tail:
\begin{equation}
\label{eq:lambda_sum}
\sum_{q^\top \widetilde D q > r^2} \lambda_q
=
C \sum_{q^\top \widetilde D q > r^2}
e^{-q^\top \widetilde D q}.
\end{equation}

Let $u_q \doteq \widetilde D^{1/2} q$. Then $q^\top \widetilde D q = \|u_q\|^2$, and the points $\{u_q\}_{q\in\mathbb Z^d}$ form a lattice in $\mathbb R^d$ with
fundamental cell\[
\mathcal P = \widetilde D^{1/2}\!\left([-\tfrac12,\tfrac12]^d\right),
\quad
\mathrm{Vol}(\mathcal P) = \sqrt{\det(\widetilde D)}.
\]
Let
\[
\rho \doteq \max_{v\in \mathcal P} \|v\|
= \frac{1}{2}\sqrt{\operatorname{tr}(\widetilde D)}.
\]
denote the maximum radius of the cell $\mathcal P$.

Since for all $u \in u_q + \mathcal P$, $\|u_q\|\ge\|u\|-\rho$ it holds that
\[
e^{-\|u_q\|^2}
\le
e^{-(\|u\|-\rho)^2},
\]
then we obtain the bound
\begin{align*}
e^{-\|u_q\|^2}
&\le
\frac{1}{\operatorname{Vol}(\mathcal P)}
\int_{u_q+\mathcal P}
e^{-\|u_q\|^2}\,du \\
&\le
\frac{1}{\operatorname{Vol}(\mathcal P)}
\int_{u_q+\mathcal P}
e^{-(\|u\|-\rho)^2}\,du .
\end{align*}
Summing over all omitted lattice points yields
\begin{align*}
\sum_{\|u_q\|>r} e^{-\|u_q\|^2}
&\le
\frac{1}{\operatorname{Vol}(\mathcal P)}
\sum_{\|u_q\|>r}
\int_{u_q+\mathcal P}
e^{-(\|u\|-\rho)^2}\,du \\
&=
\frac{1}{\operatorname{Vol}(\mathcal P)}
\int_{\bigcup_{\|u_q\|>r}(u_q+\mathcal P)}
e^{-(\|u\|-\rho)^2}\,du .
\end{align*}
The union of the shifted cells satisfies
\[
\bigcup_{\|u_q\|>r}(u_q+\mathcal P)
\subseteq
\{u\in\mathbb R^d:\|u\|>r-\rho\}.
\]
Thus, 
\[
\sum_{\|u_q\|>r} e^{-\|u_q\|^2}
\le
\frac{1}{\operatorname{Vol}(\mathcal P)}
\int_{\|u\|>r-\rho}
e^{-(\|u\|-\rho)^2}\,du .
\]

Substituting into~\eqref{eq:lambda_sum}, we obtain
\[
\sum_{q^\top \widetilde D q > r^2} \lambda_q
\le
\frac{C}{\sqrt{\det(\widetilde D)}}
\int_{\|u\|>r-\rho}
e^{-(\|u\|-\rho)^2}\,du.
\]

Since the integrand depends only on $\|u\|$, we use polar coordinates and obtain
\begin{align*}
\int_{\|u\|>r-\rho}
e^{-(\|u\|-\rho)^2}\,du
&=
S_{d-1}
\int_{r-\rho}^{\infty}
e^{-(t-\rho)^2} t^{d-1}\,dt,
\end{align*}
where $S_{d-1}$ denotes the surface area of the unit sphere in $\mathbb R^d$,
\begin{equation}
\label{eq:surface_area}
S_{d-1}
=
\frac{2\pi^{d/2}}{\Gamma(d/2)},
\end{equation}
and $\Gamma(\cdot)$ is the Gamma function.

Combining the bounds with~\eqref{eq:app_tail_sum} gives
\begin{align*}
&\sup_{z\in\mathcal Z}|g(z)-Pg(z)| \\
&\qquad \quad \le
2B
\Bigg(
\frac{C}{\sqrt{\det(\widetilde D)}}\,
S_{d-1}
\int_{r-\rho}^{\infty}
e^{-(t-\rho)^2}t^{d-1}\,\mathrm dt
\Bigg)^{1/2}.
\end{align*}

\section{Caching for hyperparameter learning}
\label{app:caching}

Recall from~\eqref{eq:phi_trig_features} that the DTF-GP feature map consists of trigonometric basis functions weighted by the square roots of the spectral weights. We write the feature matrix as
\begin{equation}
\label{eq:app_feature_factorization}
    \Phi_N(\xi)
    =
    H_N L(\xi),
\end{equation}
where $H_N\in\mathbb R^{N\times M}$ contains the evaluations of the unweighted trigonometric basis functions at the training inputs and $L(\xi)\in\mathbb R^{M\times M}$ is a diagonal matrix containing the parameter-dependent feature weights, with $\xi$ being the vector of hyperparameters, consisting of the kernel parameters and the noise variance.

The feature-space posterior and marginal likelihood depend on the data through the quantities
\begin{equation*}
    \Phi_N(\xi)^\top \Phi_N(\xi),
    \qquad
    \Phi_N(\xi)^\top y_{1:N}.
\end{equation*}
Using~\eqref{eq:app_feature_factorization}, these quantities can be written as
\begin{align*}
\Phi_N(\xi)^\top \Phi_N(\xi)
&=
L(\xi) H_N^\top H_N L(\xi), \\
\Phi_N(\xi)^\top y_{1:N}
&=
L(\xi) H_N^\top y_{1:N}.
\end{align*}
Thus, the data-dependent terms
\begin{equation*}
\label{eq:app_cache_definitions}
    G_N \doteq H_N^\top H_N,
    \qquad
    v_N \doteq H_N^\top y_{1:N}
\end{equation*}
can be precomputed once and reused during hyperparameter optimization. For a given value of $\xi$, the feature-space matrix becomes
\begin{equation}
\label{eq:app_feature_space_matrix_cached}
    V_N(\xi)
    =
    L(\xi)G_NL(\xi)+\sigma_n^2 I,
\end{equation}
and the data vector entering the posterior mean is
\begin{equation}
\label{eq:app_feature_space_vector_cached}
    \Phi_N(\xi)^\top y_{1:N}
    =
    L(\xi)v_N.
\end{equation}

This caching avoids recomputing $\Phi_N(\xi)^\top\Phi_N(\xi)$ and $\Phi_N(\xi)^\top y_{1:N}$ from all training samples for every marginal likelihood evaluation. Without caching, forming $\Phi_N(\xi)^\top\Phi_N(\xi)$ costs $\mathcal O(NM^2)$, followed by a Cholesky factorization of an $M\times M$ matrix with cost $\mathcal O(M^3)$. With caching, each evaluation only requires assembling~\eqref{eq:app_feature_space_matrix_cached} from the cached matrix $G_N$ and factorizing the resulting $M\times M$ matrix. Hence, the per-evaluation cost is reduced from $\mathcal O(NM^2+M^3)$ to $\mathcal O(M^3)$.

\end{document}